\theoremstyle{plain}
\newtheorem{theorem}{Theorem}
\newtheorem{lemma}[theorem]{Lemma}
\newtheorem{example}{Example}
\Crefname{sublemma}{Sublemma}{Sublemmas}
\newtheorem{corollary}[theorem]{Corollary}
\newtheorem{open problem}[theorem]{Open Problem}
\theoremstyle{definition}
\newtheorem{definition}{Definition}[section]
\newlist{parts}{enumerate}{1}
\crefname{partsi}{Part}{Parts}
\setlist[parts,1]{label=\alph*.,ref=\alph*}
\title{Serial Monopoly on Blockchains}
\author{
Noam Nisan\thanks{Starkware and School of Computer Science \& Engineering, The Hebrew University of Jerusalem \emph{E-mail}: \href{mailto:noam.nisan@gmail.com}{noam.nisan@gmail.com}.}
}
\begin{document}

\maketitle

\begin{abstract}
We study the following problem that is motivated by Blockchains where ``miners'' 
are serially given the monopoly for assembling transactions into the next block.  
Our model has a single good that is sold repeatedly every day where 
new demand for the good arrives every day.  The novel element in our model is that all 
unsatisfied demand from one day remains in the system and is added to the new demand of the next day.
Every day there is a new monopolist that gets to sell a fixed supply $s$ of the good and naturally 
chooses to do so at the monopolist's price for the combined demand.
What will the dynamics of the prices chosen by the sequence of monopolists be?  What level of efficiency will be obtained in the long term? 
 
We start with a non-strategic analysis of users' behavior and our
main result shows that prices keep fluctuating wildly and this is an endogenous property of the model and happens even when demand is stable with nothing stochastic in the model.  These price fluctuations underscore the necessity of an analysis under strategic behavior of the users,
which we show results in the prices being stable at the market equilibrium price.
\end{abstract}

\section{Motivation: Transaction Fees on Blockchains}

Blockchain systems like Bitcoin \cite{bitcoin} or Ethereum \cite{eth} sell ``slots'' on the blockchain to users who 
wish to put their transactions on it.  Every period a ``leader'' (miner, validator, sequencer) is chosen
to produce the next ``block'' in the blockchain, where the choice of the leader is done using 
some mechanism that need not concern us here such as proof-of-work or proof-of-stake.  
The size of each block is limited by the protocol in some way (e.g., bytes for Bitcoin or ``gas'' for Ethereum), 
and the 
leader gets to choose which transactions will fill the block up to that limit.  
The blockchain's transaction fee mechanism specifies how much the users of chosen transactions pay and how much the leader receives (in addition to a fixed ``block reward'') 
and needs to take into account that both the users and the leaders are strategic.  

The mechanism used by the Bitcoin blockchain is simple {\em ``pay your bid''}: users place bids for their 
transactions and the leader (miner) gets to choose an arbitrary subset of transactions and 
charges each of them exactly what was bid for it.  Clearly a
strategic leader will accept the highest bidding transactions (normalized to their ``size'') that
fit within the block size limitations.
A strategic user will obviously shade his bid by an amount that is not easy to calculate well.

The Ethereum blockchain
has a mechanism, known as EIP-1559 \cite{1559}, that aims to be more straightforward for users to bid.
The mechanism's most significant feature is that it uses {\em ``dynamic posted prices''} where
the (``base-gas'') price for the next block is determined by the protocol as a function of previous blocks.  
Each user
bids a maximum price and only transactions that bid at least the 
block's price will be included in the block, and they all pay the block's fixed gas price (rather than their 
bid).  A significant additional feature of EIP-1559 is that all fees are ``burnt'' rather than going 
to the leader who only gets a small additional first-price-like ``tip''.  Burning the fees is required to ensure that leaders are not motivated to collude with
users.  Conceptually, since the (base-gas) price of a block is deterministically
determined by the protocol according to the history, neither the leader nor the users have
any advantage in manipulation.  The formula that determines the block prices increases them when there has recently been more demand than supply and decreases them in the opposite case
thus managing to balance the satisfied demand with the
average supply of slots in a block.  The exact incentives in this mechanism and related ones are formally
defined and studied in \cite{R21} and further in \cite{CS21}, an analysis of the block price dynamics appears in \cite{LRMP22} and 
the general class of ``dynamic posted prices" mechanisms is studied in \cite{FMPM21}.  We will
not delve deeper into the details of this mechanism as for our purposes
the simple conceptual description above suffices.

The fee-burning part of this mechanism may be viewed as undesirable as 
it reduces the Ethereum token supply which may or may not
be desired from other points of view.  
A third mechanism -- that does not require fee burning -- suggested in \cite{BE19} and in \cite{LSZ22} 
is to use
{\em monopolist pricing}: each leader is allowed to choose an arbitrary price for his block
and can then collect all transactions that are willing to pay this price (in \cite{BE19} this is called generalized second price).  
The rational leader will
certainly choose the monopolist price that maximizes the product of the resulting block size and the price.
Intuitively, as transactions are expected to be small relative to the total
block size one may expect users to be ``price-takers'' and thus
not to have any significant incentive to shade their bids.  This mechanism was analyzed in 
\cite{LSZ22, Y18, BE19}, but again for our purposes this simple intuition suffices.

A major difference between the monopolist pricing mechanism and the two previous ones is in
what they optimize for.  The first two should reach (close to) the market equilibrium and
thus optimize ``social welfare'' -- the total value of accepted transactions
subject to the blockchain capacity limitations\footnote{While the exact analysis may depend on the model, 
intuitively both ``pay your bid'' and ``EIP-1559'' 
should reach the market equilibrium.}.  The monopolist
pricing optimizes the leader's revenue and may lead to unbounded losses of welfare.  While
not optimizing social welfare is certainly a weakness of this mechanism, as \cite{LSZ22} argues, optimizing
revenue may be an advantage for the security of the blockchain.  
In particular they note that mechanisms
that reach market equilibrium have the problematic property that if the platform's capacity
suffices to handle {\em all} demand, then the prices would go down to $0$ which may endanger the security of the blockchain.

All the discussion so far looked at a single block in isolation: it looked at the single leader of the 
block and the set of users for that block and assumed that they all
were myopic i.e that their strategic considerations were only about the given block.  
This is the case both for the intuitive explanations above and for the
formal analysis in the papers cited.  While this assumption may be a good modeling choice for the leaders since 
in large systems we expect a single miner to only be chosen to be leader infrequently,
it is not at all realistic for the users since a transaction that is not accepted to one block remains in the 
``mem-pool'' and can be accepted into one of the next few blocks, potentially 
within less than a minute.  Indeed \cite{LSZ22} left the analysis of ``patient users'', as an open problem.  
Even ignoring the strategic behavior of users, just the fact that the unsatisfied transactions
from one block remain as demand for future blocks, clearly leads to a nontrivial dynamic in the sequence of blocks.\footnote{This is 
somewhat in the spirit of, e.g., \cite{GT20,GT21} in a different context, and considered by
\cite{HLM21} in the blockchain context for the case of pay your bid auctions.}  
Specifically, if a monopolist in some block has charged a high price, leaving much unsatisfied demand, then the next block will
get this pent-up demand and thus see a heaver total demand at lower values which
intuitively may cause the next 
block's price to be lower.  
Figure \ref{fig:eth} shows a simulation of this dynamic on a data from a typical ``uneventful'' sequence of
50 Ethereum blocks.  The serial monopoly dynamics extracted $12\%$ more revenue on this sequence but lost $6.5\%$ 
of total welfare (total sum of values of transactions, counted according to their bids).  The heart of this paper is trying to 
analyze this behavior, understanding the fluctuations in prices, the welfare loss, and the implied strategic
considerations of the users.

\begin{figure}
\centering
\includegraphics[scale=0.85]{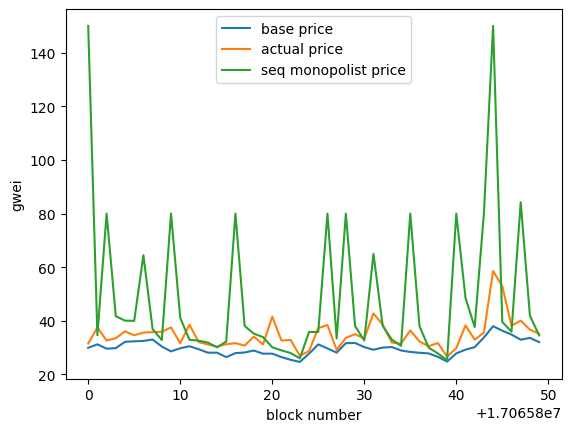}
  \caption{\small Simulation of serial monopoly on data from a sequence of 50 
	Ethereum blocks (10 minutes), comparing 
	the series of monopolist prices to the true block gas prices with average tips (actual price) 
	and without tips (EIP-1559 base price).}
  \label{fig:eth}
\end{figure}

In our model we have a series of monopolists where each of them is
faced with all the pent-up demand from previous blocks as well as some new flow of demand and then
gets to chose his monopolist price.  
We start by analyzing the dynamics assuming that bidders are non-strategic or,
equivalently, act myopically as previously studied, even though their valuation is really patient, i.e.
they get the value of their transaction even if it
is scheduled at later blocks.\footnote{Our analysis focuses
on ``fully patient'' bidders whose value for the transaction does not decay with time.  This also models well scenarios
where value decay happens at time scales that are significantly larger than the block times, a situation that seems to fit most
transactions on the blockchain with notable exceptions being MEV and some DeFi.  
One may certainly also consider
intermediate levels of patience where a transaction's value decays with time at some rate, and
such cases would be expected to lie between the fully myopic and full patient extremes.}  

There are several reasons for exploring
such myopic behavior by non-myopic users. First, this is a natural first step before continuing with a strategic analysis.  This is
especially true in this case where previous argumentation as well as intuition may suggest that users are close
to being ``price takers''
and gain little, if at all, from strategizing. Second, this can explain simulations, such as that given in figure \ref{fig:eth},
that are run on existing data.  Finally, this analysis will turn out to have implications for the strategic analysis that we will do later.

We will formally describe
our model and results in the next subsection but, for now, let us state their intuitive implications for
blockchains.  Our main, and surprising, result is that monopolist pricing dynamics leads to never-ending 
price fluctuations and this happens endogenously without any stochastic element in the model and when
the exogenous conditions are completely stable.  In this respect serial monopoly completely 
sacrifices one of the main desiderata for a fee mechanism, that of price stability.  Even worse,
once prices fluctuate, users are motivated to shade their bids and wait for lower prices.  This happens
even when bidders are small and each one of them does not affect prices at all.  Despite being
a price taker within a single block, shading bids is highly beneficial across blocks.

We do find some silver lining here regarding the social welfare achieved.  Recall that a main concern regarding
monopolist pricing is that it sacrifices efficiency since leaders do not fill block completely.  
We do, however, show that the social welfare (total value of accepted transactions) 
achieved by the serial monopoly rule (still with users bidding non-strategically)
is mathematically guaranteed to be at least one half of the optimal social welfare.  As usual when
one can prove a formal guarantee, things are better in most specific cases.  E.g., for transaction
values that are uniform in $[0,1]$ we calculate the loss of social welfare to be only $6.25\%$.  
And, nicely, this happens while gaining on revenue.

As mentioned, this finding of price fluctuations strongly suggests that patient users should bid strategically and
so calls for an analysis under strategic bids by the users. 
Continuing to analyze ``fully-patient'' users, but now acting strategically, we
find that, in equilibrium, users 
shade their bids so that the system is always at the market prices without any price fluctuations. 
However, as this shading requires information about the market conditions, 
optimal bidding may be difficult. 

We are thus back where we started, having reached essentially the same outcome and difficulties as the pay-your-bid mechanism.  I.e., once
strategic patience is taken into account the main motivations of \cite{LSZ22} of near incentive-compatibility and of better revenue are lost.
While one may view this as an overall negative conclusion given the original motivation, the outcome reached by serial monopoly 
should be the same as that reached by the pay-our-bid mechanism, and so it does look like a completely viable alternative for a fee mechanism.  
More refined comparison of the serial monopoly
mechanism to the ``vanilla''
pay-your-bid mechanism may require experimentation and one may speculate that
bidding can be easier for serial monopoly 
since, at equilibrium, my own payment does not depend on my own bid.

\section{Serial Monopoly: Model and Results}

We now formalize and analyze our model in abstract terms of a ``serial monopoly'' that may be of more general interest.

\subsection{The (Non-Strategic) Model}

So here is our model: 
At each time step $t=1,2,...$ some demand for some  
homogeneous good arrives into the market.  The daily demand is 
specified by a fixed demand function $Q(p)$ that specifies the 
demanded quantity at each price level $p$.  For ease
of exposition we will assume that $Q$ is continuous and strictly decreasing.

Every day a new monopolist with a fixed daily supply $s$ is chosen.  This monopolist
sees in front of him the total current demand $D^t()$ which is the sum of the pent-up demand from previous time steps and the new daily demand
and chooses a price level $p^t$ that
maximizes his revenue.   Specifically the price chosen by the monopolist at day $t$ is the price $p^t$ 
that maximizes the revenue $p \cdot min(s,D^t(p))$ 
and the quantity supplied is thus $q_t=D^t(p^t) \le s$.   
The pent-up demand after this amount is supplied is given 
by a demand function $Z^t(p)=D^t(p)-q^t$ for $p \le p_t$ and
$Z^t(p)=0$ for $p \ge p_t$.  The total demand for the next time step is obtained by 
adding the daily demand $Q()$
to this pent-up demand.  So, to summarize, here is the formal dynamics we study:

\begin{itemize}
\item A continuous and strictly decreasing demand function $Q()$ and a fixed supply amount $s$ are given.
\item There is initially no pent-up demand: $Z^0(p)=0$ for all $p$.
\item For every time step $t=1...$:
\begin{itemize}
\item The day $t$ demand function is given by $D^t(p)=Z^{t-1}(p)+Q(p)$.
\item The day $t$ monopolist price and quantity are given by $p^t = argmax_p (p \cdot min(s,D^t(p)))$ and $q_t=D^t(p^t)$.
\item The pent-up demand function after day $t$ is given by $Z^t(p)=D^t(p)-q^t$ for $p \le p^t$ and $Z^t(p)=0$ for $p \ge p^t$.
\end{itemize}
\end{itemize}

This model simplifies matters as much as possible, in particular
assuming (1) a fixed flow of demand, i.e. that the same demand distribution arrives every day, 
(2) that each monopolist has the same fixed supply amount, (3) that monopolists never repeat, i.e. are completely
myopic and thus naturally behave as a simple monopolists, (4) the demand is ``infinitely patient'' so values do not decay with time and (5) 
that the daily demand is fixed, not chosen strategically, and fully known by the monopolist.  Also note that the model is completely deterministic.

We would like to analyze what happens when this dynamics reaches an equilibrium: 
which prices and quantities would be reached and what
is the resulting efficiency?  We are in for an unpleasant surprise: the dynamics do not 
reach an equilibrium, but instead get some complex 
non-cyclic pattern of ever changing prices.  Here is a typical example.

\subsection{Example}

\begin{figure}
\centering
\includegraphics[scale=0.85]{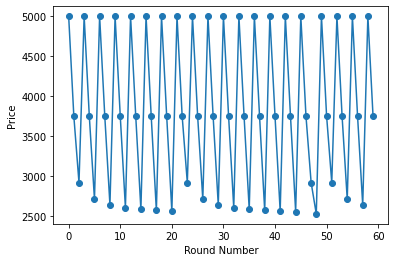}
  \caption{\small Serial monopoly prices for demand that is uniform on \{1,2,...,10000\}}
  \label{fig:unif-sim}
\end{figure}

Assume that the daily demand is generated by user valuations that are uniform in $[0,1]$, i.e. the daily demand
function is given by $Q(p)=1-p$ (for $0 \le p \le 1$) and let us assume that the fixed daily supply is $s=1$.  
The market equilibrium is at $price=0$ and $quantity=1$, which give a total social
welfare of $1/2$, and revenue of $0$.
The monopolist aims to maximize $p \cdot (1-p)$ which happens for for $price=0.5$ and $quantity=0.5$ with sub-optimal social welfare of $3/8$ and monopolist's revenue of $1/4$.  
So let us follow the serial monopolists step by step.

\begin{enumerate}
\item
Day 1: There is no pent-up demand  so the first monopolist will indeed choose the monopoly 
price of $p^1=0.5$ with quantity $q^1=0.5$ obtaining revenue of $1/4$.  
The pent-up demand after the first step is 
given by $1/2-p$ for $p \le 1/2$ (and 0 for $p \ge 1/2$).

\item
Day 2: The 
total demand at this stage is $3/2-2p$ for $p \le 1/2$ and is $1-p$ for $p \ge 1/2$.  
The monopolist now can do better than
choosing the original monopoly price by 
choosing $p^2=3/8=0.375$ with quantity $q^2=3/4$ and revenue of $9/32>1/4$.  
The pent-up demand now is $3/4-2p$ for $p \le 3/8$ (and 0 for $p \ge 3/8$). 

\item Day 3: 
the total demand is now given by $7/4-3p$ for $p \le 3/8$ (and the usual $1-p$ for $p \ge 3/8$).  
The optimal price turns out to be  
$p^3=7/24=0.29...$ with quantity $q^3=7/8$.   Pent-up demand is $7/8-3p$ for $p \le 7/24$.

\item
Day 4: now we have a total demand of $15/8-4p$ for $p \le 7/24$ (and the usual $1-p$ for $p \ge 7/24$). Surprisingly, the pent-up demand
does not help the monopolist: the demand at price $7/32$ is already 1 so the highest revenue
obtainable in the range $p \le 7/24$ is at this price which would give revenue of only $7/32<1/4$ so
the optimal revenue is obtained at the original monopolist price of $p^4=0.5$.

\item
A simulation of the first 60 steps with discretized values appears in figure 1 where we see that the prices keep 
irregularly oscillating. We wish to emphasize that the observed irregularity is {\em not} an 
artifact of the discretization or of the simulation.
\end{enumerate}

\subsection{Results}

We first show that even though the prices keep oscillating irregularly, we can provide sufficient analysis of the price dynamics, in particular
showing that there exists a price
$p^{ser}$ such that all demand above it is satisfied within a bounded time
and all demand below it is never supplied at all.  We show that this price is given by the formula $p^{ser} = p^{mon} \cdot q^{mon} / s$,
where $p^{mon}$ is the monopoly price of $Q()$ and $q^{mon}$ the monopoly quantity.
Thus, despite the lack of any convergence to equilibrium, $p^{ser}$ 
can be viewed as the one ``reached'' by the dynamics in this sense of which demand gets supplied.  The following theorem applies to any
(strictly decreasing and continuous) demand function $Q()$ and supply amount $s$ for which {\em the monopolist revenue is strictly
higher than the market equilibrium revenue.}

\begin{theorem} \label{thm-dyn}
The dynamics of the the daily prices $p^t$ behave as follows:
\begin{enumerate}
\item They are sandwiched between the price $p^{ser} = p^{mon} \cdot q^{mon} / s$ and the monopolist price
$p^{mon}$, i.e., for all $t$ we have that $p^{ser} \le p^t \le p^{mon}$.  In particular, no demand at prices lower than $p^{ser}$ is ever supplied.
Furthermore, these bounds are tight even in the limit and
$p^{ser} = \liminf_{t \rightarrow \infty} p^t$ and $p^{mon}=\limsup_{t \rightarrow \infty} p^t$.  
\item For every price $p>p^{ser}$
there exists a constant $\Delta_p$ such that in every consecutive $\Delta_p$ steps we have at least some 
$p^t \le p$ and thus all demand at a
price above $p^{ser}$ is eventually supplied and furthermore this happens within a bounded 
time lag (where the bound $\Delta_p$ depends on the price). 
\end{enumerate}
\end{theorem}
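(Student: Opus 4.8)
The plan is to obtain both inequalities of part 1 from a single observation, reduce the $\liminf$ claim to part 2, derive $\limsup = p^{mon}$ from the recurrence of low prices, and reserve the real work for the recurrence itself. The observation is that the monopolist can always guarantee revenue at least $p^{mon}q^{mon}$: since $Z^{t-1}\ge 0$ we have $D^t\ge Q$ pointwise, so pricing at $p^{mon}$ sells at least $Q(p^{mon})=q^{mon}$ units, where $q^{mon}<s$ (this strict inequality is exactly the hypothesis that monopoly revenue exceeds the market-clearing revenue $p^{eq}s$, with $Q(p^{eq})=s$). Hence $R^t\eqdef p^tq^t\ge p^{mon}q^{mon}$ for every $t$. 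The lower bound follows at once, since $q^t\le s$ gives $p^t=R^t/q^t\ge p^{mon}q^{mon}/s=p^{ser}$. For the upper bound I would induct: as $Z^{t-1}(p)=0$ for $p\ge p^{t-1}$ and $p^{t-1}\le p^{mon}$ by hypothesis, we have $D^t(p)=Q(p)$ for all $p\ge p^{mon}$, so any price $p>p^{mon}$ earns only $pQ(p)<p^{mon}q^{mon}\le R^t$ (using that $pQ(p)$ is maximized at $p^{mon}$) and cannot be optimal. Because $p^t\ge p^{ser}$ always and every sale occurs at the posted price, no buyer of value below $p^{ser}$ ever transacts.

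For part 2 fix $p>p^{ser}$ and examine a maximal block of consecutive days on which $p^\tau>p$. Two bookkeeping facts drive the argument. First, the day $t_0$ just before the block had $p^{t_0}\le p$, which clears all pent-up demand of value $\ge p$, so $Z^{t_0}(p)=0$; and throughout the block every unit sold has value exceeding $p$, so $Z^\tau(p)=Z^{\tau-1}(p)+Q(p)-q^\tau$, and summing against $Z^\tau(p)\ge 0$ shows the block average of $q^\tau$ is at most $Q(p)$. Second, for any intermediate level $p'\in(p^{ser},p)$ the same accounting forces $Z^\tau(p')$ to grow at least linearly, at rate $Q(p')-Q(p)>0$ per day. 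This accumulating mass in the band $(p',p)$ should eventually make a clearing price just below $p$ strictly more profitable than any price above $p$: a clearing price $\rho\in(p',p)$ earns about $\rho s>p^{ser}s=p^{mon}q^{mon}$, whereas the ``bare'' tail $Q$ above $p$ earns at most $p^{mon}q^{mon}$. Quantifying this yields a block-length bound $\Delta_p$, which is precisely part 2. As a corollary, part 2 gives $\liminf_t p^t\le p$ for every $p>p^{ser}$, so combined with the part-1 lower bound we get $\liminf_t p^t=p^{ser}$.

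For $\limsup_t p^t=p^{mon}$ I would argue by reversion, using the $\liminf$ just established. Along a subsequence $p^\tau\to p^{ser}$; since $R^\tau\ge p^{mon}q^{mon}=p^{ser}s$ and $q^\tau\le s$, these days have $q^\tau\to s$, hence serve strictly more than the replenishment $Q(p^{ser})<s$ of value-$\ge p^{ser}$ demand and therefore deplete the backlog at values above $p^{ser}$. Once that backlog is exhausted, the demand coincides with the bare curve $Q$ at every value above (roughly) $p^{ser}$; then no price near $p^{ser}$ can match the revenue $p^{mon}q^{mon}$, which $\rho Q(\rho)$ attains uniquely at $p^{mon}$, so the price is driven back (essentially) to $p^{mon}$. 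Since prices cannot remain bounded away from $p^{mon}$ indefinitely without such depletion occurring, $\limsup_t p^t=p^{mon}$.

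The crux, in both part 2 and the reversion step, is the quantitative claim that enough accumulated band-mass actually forces the price down; the genuine obstacle is the tail. A priori the monopolist could keep $p^\tau>p$ indefinitely while still selling a great deal by repeatedly ``skimming'' high-value backlog, so one cannot simply pit the clearing revenue against the bare-tail revenue $\rho Q(\rho)$. The heart of the proof is thus to control pent-up demand at values above $p$ — equivalently to show $\sup_t Z^t(p')<\infty$ for each $p'>p^{ser}$ — so that no price above $p$ can rival the $\approx\! p s$ available from clearing the filled band. I expect this uniform backlog bound, which amounts to tracking how the optimal (possibly tail) price must descend toward $p$ as the band fills, to be the main technical difficulty. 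The monotonicity facts $\hat p^t\le p^t\le p^{mon}$ (whence $Z^t(p^{mon})\le s-q^{mon}$, since $D^t(p^{mon})\le D^t(\hat p^t)=s$) and the implication $D^t(p)>s\Rightarrow p^t>p$ are the tools I would use to begin establishing it.
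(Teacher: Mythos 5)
Your part~1 is fine and essentially the paper's own argument: the lower bound from the guaranteed revenue $p^{mon}\cdot q^{mon}$ (Lemma~\ref{A1}), and the upper bound by induction using $D^t(p)=Q(p)$ for $p\ge p^{t-1}$ (Lemma~\ref{lem-b1}, Corollary~\ref{A2}). But part~2 --- which is the heart of the theorem, and on which your $\liminf$ and $\limsup$ claims both rest --- is not actually proved. You set up the correct bookkeeping (the band $(p',p)$ fills at rate $Q(p')-Q(p)$ per day, the block-average of $q^\tau$ is at most $Q(p)$), and then you explicitly defer ``the quantitative claim that enough accumulated band-mass actually forces the price down,'' correctly observing that the monopolist might keep $p^\tau>p$ indefinitely by skimming high-value backlog. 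That deferred claim \emph{is} the theorem; naming it as the expected main difficulty and listing tools does not resolve it. Worse, the route you propose --- first establish $\sup_t Z^t(p')<\infty$ for each $p'>p^{ser}$ --- is nearly circular: a uniform bound on backlog above $p^{ser}$ is essentially equivalent to the bounded-lag recurrence you are trying to prove, so one cannot use it as the entry point.

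The paper closes exactly this gap with a different mechanism (Lemmas~\ref{lem-b2} and~\ref{lem-main}): it does not bound the skimmable backlog uniformly, but instead ties the \emph{amount skimmed} to the \emph{price decrement}. Once the band fills ($D^t(p)\ge s$ for some $p>p^{ser}$), comparing the optimal revenue with the clearing option $p\cdot s$ gives $(p-p^{ser})\,s\le p^{mon}\,Z^{t-1}(p^t)$; and since all backlog at values $\ge p^{t-1}$ was cleared the previous day, $Z^{t-1}(p^t)=Z^{t-1}(p^t)-Z^{t-1}(p^{t-1})\le (t-1-T)\bigl(Q(p^t)-Q(p^{t-1})\bigr)$ by the interval bookkeeping of Lemma~\ref{lem-a}, where $T$ is the last time the price was low. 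Thus each day of skimming forces a price drop of at least $s(p-p^{ser})/\bigl(p^{mon}(t-T)\bigr)$ measured in $Q$-units; summing over a block, the harmonic series diverges while the total available drop is at most $Q(p^{ser})-Q(p^{mon})<\infty$, which bounds the block length and yields an explicit $\Delta_p$. Your $\limsup$ argument inherits the same gap and is additionally looser than needed: ``once the backlog is exhausted the price is driven back (essentially) to $p^{mon}$'' does not show that prices actually get arbitrarily close to $p^{mon}$. The paper instead combines Lemma~\ref{lem-b1} (each price either equals $p^{mon}$ or strictly decreases) with a conservation identity (Lemma~\ref{lem-q}): if $p^t=p^{mon}$ only finitely often, prices would decrease toward $p^{ser}$ forever, forcing per-day sales above some $q^*>q^{ser}$, while conservation caps long-run average sales at $q^{ser}$ --- a contradiction (Lemma~\ref{lem-up}).
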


\begin{figure}
\centering
\includegraphics[scale=0.50]{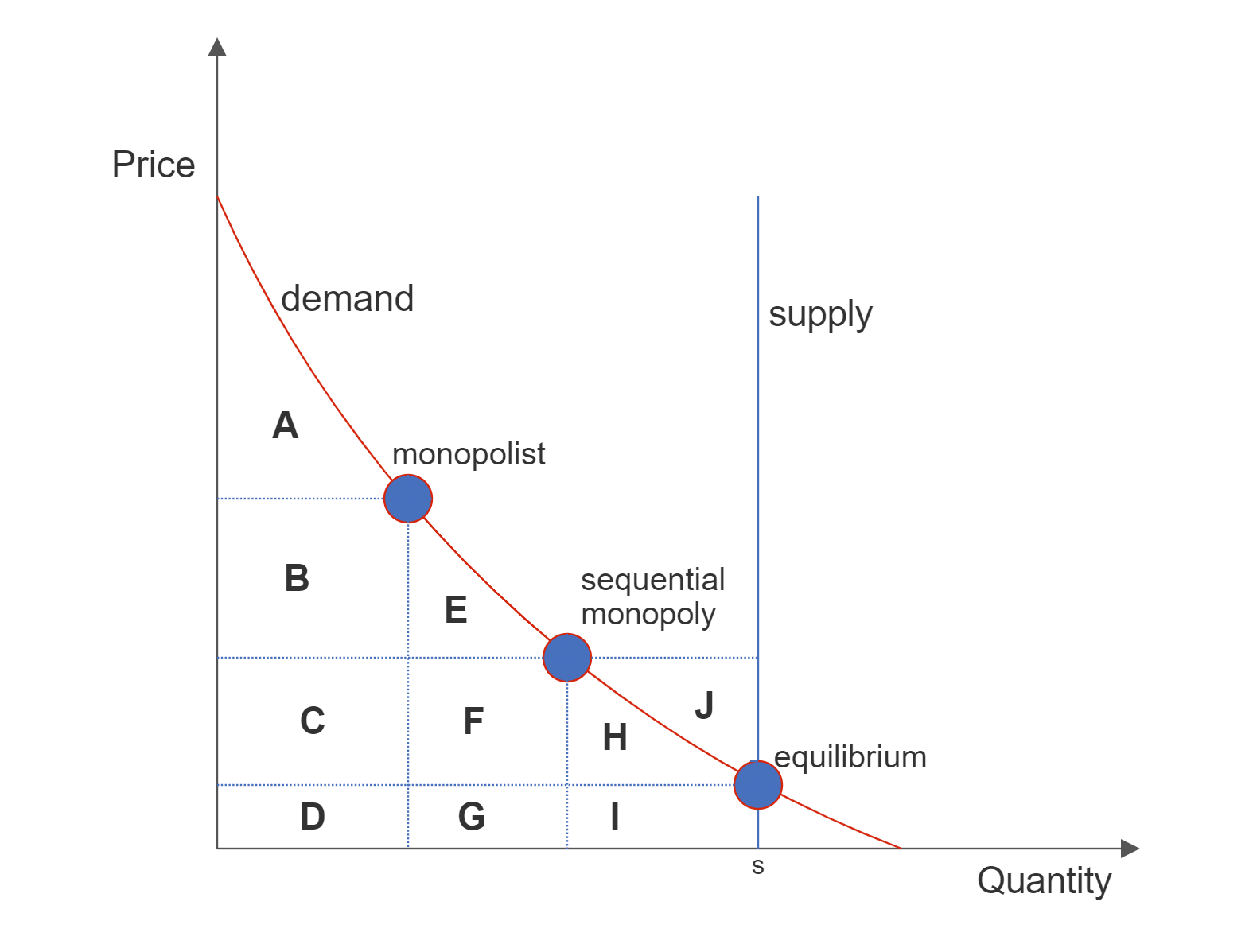}
  \caption{\small The Supply and demand curves with the market equilibrium, monopoly, and serial
monopoly (price,quantity) points.  Different areas that depict parts of the 
social welfare at these points are given labels.}
  \label{fig:sup-dem}
\end{figure}

We then analyze the social welfare achieved by this dynamic.  The ``social welfare'' is the sum (integral) of the 
values of the users whose demand is supplied.  It is well known that monopolist pricing may cause an unbounded loss of social welfare.
Surprisingly, we show that the social welfare achieved by serial monopolists approximates the optimum well!
Since our analysis above shows that all demand above price $p^{ser}$ is supplied, the average long term social welfare is 
depicted by the total area of regions A+B+C+D+E+F+G in figure \ref{fig:sup-dem}.  This welfare
is obviously bounded by the optimal possible social welfare (depicted as the total areas of regions A+B+C+D+E+F+G+H+I in figure \ref{fig:sup-dem}), but turns out
to be not far from it.

\begin{theorem} \label{thm-sw}
The social welfare obtained by by the serial monopoly
is at least one half of the  
optimal social welfare.  This bound is 
tight as there exists some daily demand $Q()$ where the ratio is exactly $1/2$.
\end{theorem}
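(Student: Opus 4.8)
The plan is to leave the price domain and compare the two welfares as areas under the demand curve in the quantity domain. Write $v(q)=Q^{-1}(q)$ for the inverse demand, i.e. the value of the $q$-th most valuable unit of daily demand; since $Q$ is continuous and strictly decreasing this is well defined and decreasing, and the long-run average welfare from serving the top $x$ units of daily demand is $\int_0^x v(q)\,dq$. By \cref{thm-dyn} every unit of daily demand with value above $p^{ser}$ is supplied within bounded time and no unit below $p^{ser}$ is ever supplied, so on average the serial monopoly serves exactly the top $Q(p^{ser})$ units each day, giving $W^{ser}=\int_0^{Q(p^{ser})} v(q)\,dq$. The capacity-constrained optimum serves the top $q^{eq}=\min(s,Q(0))$ units each day, so $W^{opt}=\int_0^{q^{eq}} v(q)\,dq$; under the hypothesis the identity $p^{ser}s=p^{mon}q^{mon}$ together with ``monopoly revenue $>$ equilibrium revenue'' forces $p^{ser}\ge p^{eq}$, hence $Q(p^{ser})\le q^{eq}$ and the ``lost'' welfare is the area $W^{opt}-W^{ser}=\int_{Q(p^{ser})}^{q^{eq}} v(q)\,dq$ of the units whose value lies between $p^{eq}$ and $p^{ser}$.

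First I would bound the lost welfare from above: every unit it counts has index $q\ge Q(p^{ser})$ and hence value $v(q)\le p^{ser}$, so
\[
  W^{opt}-W^{ser}\;\le\; p^{ser}\bigl(q^{eq}-Q(p^{ser})\bigr)\;\le\; p^{ser}\,s .
\]
Next I would bound the kept welfare from below by the monopoly-revenue rectangle. Under the hypothesis the monopoly point is interior (the supply constraint is slack, so $q^{mon}=Q(p^{mon})$ and $v(q^{mon})=p^{mon}$), and $p^{ser}\le p^{mon}$ gives $Q(p^{ser})\ge q^{mon}$, so every unit with index $q\le q^{mon}$ has $v(q)\ge p^{mon}$, whence
\[
  W^{ser}\;\ge\;\int_0^{q^{mon}} v(q)\,dq\;\ge\; q^{mon}\,p^{mon}.
\]
The two bounds meet through the defining identity $p^{ser}=p^{mon}q^{mon}/s$, i.e. $p^{mon}q^{mon}=p^{ser}s$: the right-hand sides coincide, so $W^{opt}-W^{ser}\le p^{ser}s=p^{mon}q^{mon}\le W^{ser}$, which rearranges to $W^{opt}\le 2W^{ser}$. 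I expect this pairing — realizing that the single quantity $p^{mon}q^{mon}=p^{ser}s$ simultaneously upper-bounds the lost area and lower-bounds the kept area — to be the crux; the only other point needing care is the reduction from the oscillating dynamics to the static area $W^{ser}$, which I would justify entirely through \cref{thm-dyn}.

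For tightness I would exhibit a family approaching the bound. Take a continuous, strictly decreasing smoothing of a two-value demand: a vanishing mass $m\to 0$ of units at a large value $v_1$ and a unit mass at a moderate value $v_2$, with $s=1$ and $v_1 m=v_2(1+\delta)$ for small $\delta>0$. Then the market equilibrium is $(p^{eq},q^{eq})=(v_2,1)$, while the monopolist prefers the high-value sliver, giving $p^{mon}=v_1$, $q^{mon}=m$, and monopoly revenue $v_1 m=v_2(1+\delta)>v_2$, the equilibrium revenue (so the hypothesis holds), whence $p^{ser}=v_1 m$. The serial monopoly then serves only the top $m$ units, so $W^{ser}=v_1 m$, whereas $W^{opt}=v_1 m+v_2(1-m)$, and the ratio $\frac{v_1 m}{v_1 m+v_2(1-m)}=\frac{1+\delta}{2+\delta-m}$ tends to $\tfrac12$ as $\delta,m\to 0$. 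This shows the constant cannot be improved, and the main subtlety here is only to verify that the smoothing keeps $Q$ continuous and strictly decreasing while preserving the monopoly-versus-equilibrium inequality and the computed prices in the limit.
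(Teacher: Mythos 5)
Your proposal is correct and is essentially the paper's own argument: the paper (Lemma \ref{lem-sw}) bounds the lost welfare by the rectangle $p^{ser}\cdot s$ and the retained welfare from below by the monopoly rectangle $p^{mon}\cdot q^{mon}$, then uses the defining identity $p^{ser}s=p^{mon}q^{mon}$ exactly as you do, merely phrased via the labeled regions of Figure \ref{fig:sup-dem} rather than integrals. Your tightness construction also matches the paper's Example \ref{ex-sw} (a vanishing mass at a high value plus a unit mass at a moderate value, smoothed to be continuous and strictly decreasing), so there is nothing further to add.
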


This is a worst case result and for specific distributions the bounds are better.  In particular, for the uniform distribution the ratio is $15/16$ while
for the ``equal revenue'' distribution, the classic example with a large gap between monopolist social welfare and optimal social welfare,
serial monopoly turns out to yield optimal social welfare.

\subsection{Strategic Analysis}

All the analysis so far was for a fixed demand function and only analyzed the rational actions
of the monopolists.  If we imagine the demand coming
from a continuum of utility-maximizing bidders (as in our motivating application) then
this translates to these bidders acting in a myopic, non-strategic, way as in \cite{LSZ22}.  However the fact that, as we show, 
intra-block prices fluctuate wildly puts this modeling assumption
in question as even price-taking bidders will likely be ``patient'', preferring to wait for cheaper blocks. 
Intuitively, patient users are motivated to
strategically shade their bids down to (almost) $p^{ser}$ which is the lowest price that they can get in any block.
Once they all do so, the distribution that the monopolist sees in front of him is no longer $Q()$ but rather
a strategically declared lower distribution and as he can only optimize his revenue relative to this distribution,
intuitively leading to a lower serial price $p'^{ser}$, further reducing the bids of the users, etc. 

This leads us to an equilibrium analysis of this system where both users and leaders are strategic. 
We stick with our simple non-stochastic deterministic model with price-taking bidders and have the 
daily demand distribution $Q()$ and supply 
amount $s$
being common knowledge.  We consider a game between
multiple users and leaders: for each time step $t$ we have a (new) continuum of infinitely-small users and a single (new) leader.  
Each of our users has a true value $v$, where $v$ is chosen for each user in the continuum of users every time step 
according to the demand distribution $Q$.
In each step $t$, each user with true value $v$ declares a a value $\tilde{v}$ according to some strategic manipulation function 
that can be a function of his particulars: his value and the time, $\tilde{v}=m_t(v)$.
The {\em declared demand} at time $t$, $\tilde{Q}^t$, is generated by the distribution of $\tilde{v}$ at time $t$.  
The leader at each time $t$ may choose a price $p^t$ and 
a {\em dominant strategy} is to choose the monopolist price for the {\em declared} current demand.\footnote{Since the leader may only collect a fee
that is bounded by the {\em declared} value, the leader's knowledge of the true demand and even true values does not allow him to do any better.}

We will focus on equilibria where the users' manipulation function is time-invariant,
i.e. $m_t(v)=m(v)$ for some fixed manipulation function $m$, and where leaders all use their dominant (only non-weakly-dominated) strategy
of monopoly pricing.  This would be an equilibrium if for every bidder that is born in time $t$ with value $v$, $m(v)$ is indeed a best reply
to all the other users bidding according to $m$ and all the leaders choosing the monopoly price.

It turns out that such an equilibrium of serial monopoly just goes back to the market equilibrium of the
true demand, hence loosing the price fluctuations, optimizing social welfare, and giving up on any revenue
optimization.

\begin{theorem} \label{thm-eq}
There exists an equilibrium in un-dominated strategies with time-invariant manipulation where $p^t = p^{eq}$ and $q^t = s$ for every $t$, 
where $p^{eq}$ denotes the market equilibrium price, i.e. $Q(p^{eq})=s$.  In this 
equilibrium users bid $\tilde{v}=m(v)=min(v, p^{eq})$ and leaders charge monopoly prices.

Furthermore, in every 
equilibrium in un-dominated strategies with time-invariant manipulation we have that $p^t = p^{eq}$ and $q^t = s$ for every $t$.  
\end{theorem}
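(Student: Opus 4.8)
The plan is to handle the two parts separately: first exhibit the claimed equilibrium and verify it by direct computation, then show that no other outcome can survive.

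For existence, I would take $m(v)=\min(v,p^{eq})$ and compute the induced declared demand $\tilde Q$. Since every user with $v\ge p^{eq}$ reports exactly $p^{eq}$ while every user with $v<p^{eq}$ reports truthfully, $\tilde Q$ agrees with $Q$ on $[0,p^{eq}]$, carries a point mass of size $Q(p^{eq})=s$ at $p^{eq}$, and vanishes above $p^{eq}$. Feeding this into the dynamics, the monopolist's revenue $p\cdot\min(s,\tilde Q(p))$ equals $ps$ for $p\le p^{eq}$ and $0$ above, so it is maximized at $p^1=p^{eq}$ with $q^1=s$; moreover the fresh mass at $p^{eq}$ is exactly $s$, so it is cleared entirely and only declared values strictly below $p^{eq}$ survive as pent-up demand. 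An easy induction shows this structure is preserved — at every step the pent-up demand lives strictly below $p^{eq}$, the fresh mass of size $s$ sits at $p^{eq}$, and the monopolist again prices at $p^{eq}$ — so $p^t=p^{eq}$ and $q^t=s$ for all $t$. Finally I would verify the equilibrium condition: a single infinitesimal user cannot move the price off $p^{eq}$, so each user faces the constant price $p^{eq}$; a user with $v\ge p^{eq}$ can do no better than be served at $p^{eq}$ (achieved by declaring $p^{eq}$, with no rationing since demand equals supply there), and a user with $v<p^{eq}$ can do no better than utility $0$ (achieved by never being served). Since $m(v)\le v$ throughout, the strategy is un-dominated, which completes existence.

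For uniqueness I would fix any un-dominated time-invariant equilibrium with declared demand $\tilde Q$ and price path $(p^t)$, set $p_{\min}=\liminf_t p^t$, and note that un-domination gives $m(v)\le v$, hence $\tilde Q\le Q$. The key structural step is that patience forces the declared demand to bunch at $p_{\min}$: by the second part of \cref{thm-dyn} applied to $\tilde Q$, for every $p>p_{\min}$ the price drops below $p$ within a bounded number of steps, so a patient user with value $v>p_{\min}$ can secure service at a price arbitrarily close to $p_{\min}$ by declaring just above $p_{\min}$, whereas declaring any strictly higher value only risks being served earlier at a strictly higher price, and declaring below $p_{\min}$ risks never being served. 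In best reply, then, every user with $v>p_{\min}$ shades down toward $p_{\min}$ and essentially no declared mass survives above it, so the equilibrium $\tilde Q$ is $Q$ bunched at $p_{\min}$ (with a mass $Q(p_{\min})$ there). I would then pin down $p_{\min}$ from two sides. If $p_{\min}>p^{eq}$ then $Q(p_{\min})<s$, and the price formula of \cref{thm-dyn} applied to this bunched demand yields a serial price strictly below $p_{\min}$; but that serial price is precisely the $\liminf$ of the induced dynamics, contradicting $p_{\min}=\liminf_t p^t$. If $p_{\min}<p^{eq}$ then $Q(p_{\min})>s$, so at the only price ever charged the demand strictly exceeds the supply and a positive measure of users with value above $p_{\min}$ goes unserved; I would argue that any such user strictly prefers to deviate to a higher declaration to secure service at a price below its value, which breaks the equilibrium. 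Hence $p_{\min}=p^{eq}$, the demand folds exactly where supply meets demand, and the same computation as in the existence part gives $p^t=p^{eq}$, $q^t=s$ for every $t$.

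I expect the last direction — ruling out $p_{\min}<p^{eq}$ — to be the main obstacle. With fully patient users an unbounded backlog is not by itself contradictory, since under a first-in-first-out rationing rule every individual user is eventually served at $p_{\min}$ and so has no complaint; the argument must therefore be carried out carefully at the level of the rationing rule and of exactly which users an upward deviation can displace. The cleanest route is to show that a stationary equilibrium cannot sustain persistent excess demand: the positive per-period measure of unserved high-value users has a profitable upward deviation precisely because ties at the clearing price are broken in favour of higher declarations, so the only configuration in which no user wishes to deviate is the one that clears exactly, namely $p_{\min}=p^{eq}$.
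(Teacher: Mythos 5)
Your existence half is essentially the paper's own argument (the paper verifies the profile $m(v)=\min(v,p^{eq})$ exactly as you do), so I concentrate on the uniqueness half, where your route genuinely diverges from the paper's and where it has two real gaps.

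The first gap is the ``bunching'' step, which is both unjustified and, as stated, false. You invoke part 2 of \cref{thm-dyn} for the declared demand $\tilde Q$ to argue that every user with $v>p_{\min}$ strictly prefers to declare just above $p_{\min}$, and you conclude that the equilibrium $\tilde Q$ is $Q$ folded onto $p_{\min}$. But \cref{thm-dyn} only applies when the monopoly revenue of the demand function in question strictly exceeds its market-clearing revenue, and in equilibrium this hypothesis must \emph{fail}: that is exactly the content of the paper's \cref{no-fluc}. The correct conclusion to draw from your deviation argument (``declaring higher only risks being served earlier at a higher price'') is not bunching but \emph{no fluctuations}: if the induced dynamics fluctuate, a positive measure of users is served at prices strictly above $\liminf_t p^t$, each of whom profitably deviates downward and waits; hence in equilibrium $\tilde p^{mon}=\tilde p^{eq}$ and the price path is constant at $\tilde p^{eq}$. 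Once the price is constant, every bid at or above $\tilde p^{eq}$ is an \emph{equally good} best reply for a user with $v>\tilde p^{eq}$, so nothing forces declarations down toward $p_{\min}$. Concretely, take $Q(p)=1-p$, $s=3/4$ (so $p^{eq}=1/4$ and the standing revenue-gap assumption holds), with $m(v)=v$ for $v\le 1/4$ and $m(v)=1/4+(v-1/4)/10$ for $v>1/4$: the declared demand above $1/4$ decays ten times faster than the true one, so the leader's revenue is still maximized at $p=1/4$ every day, the price is constantly $1/4$, every such bid is a best reply, and yet declared mass is spread strictly above $p^{eq}$. The theorem pins down the outcome $(p^t,q^t)$, not the manipulation $m$ or $\tilde Q$. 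Since your elimination of the case $p_{\min}>p^{eq}$ computes the serial price of the \emph{bunched} demand, it rests on this false characterization (it is repairable, but the repair is precisely the paper's route: constant price, plus declared demand $Q(p_{\min})<s$ at that price, would make monopoly revenue of $\tilde Q$ exceed its clearing revenue and re-ignite fluctuations).

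The second gap is the case $p_{\min}<p^{eq}$, which you yourself flag as the main obstacle and leave unresolved, proposing a rationing/tie-breaking analysis that the model does not contain. The paper never meets this case at all. After \cref{no-fluc}, the constant price is by definition the market-clearing price of the \emph{declared} demand, so $\tilde Q(\tilde p^{eq})=s$ and there is never excess declared demand at the going price; rationing simply does not arise in equilibrium. The paper's next (unlabeled) lemma then gives local honesty at that price: a user with $v<\tilde p^{eq}$ must bid below it (else he is served at a price above his value), and a user with $v>\tilde p^{eq}$ must bid at or above it (else he is never served), whence $\tilde Q(\tilde p^{eq})=Q(\tilde p^{eq})$. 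Combining, $Q(\tilde p^{eq})=s=Q(p^{eq})$, and strict monotonicity of $Q$ gives $\tilde p^{eq}=p^{eq}$ — both of your cases disposed of at once, with no case split on $p_{\min}$, no serial-price computation, and no rationing argument. (Your instinct that something is delicate here is not baseless: if the declared demand had an atom of mass exceeding $s$ at the constant price, the identity $\tilde Q(\tilde p^{eq})=s$ would be problematic, and the paper leans on its footnoted convention for over-demand to keep this out of the picture. But within the paper's model the clean finish is the two-lemma argument above, not an analysis of which unserved user can displace whom.)
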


As in the non-strategic case, this analysis continued to consider bidders that ``fully patient'', 
i.e. assumed that user values do not decay with time.  
The next three sections are devoted, respectively, to the proofs of the three theorems, where most of the technical work is 
in the analysis of the dynamics in section \ref{sec-dyn}.

\section{Analysis of the Dynamics}\label{sec-dyn}

This section gradually analyzes the (non-strategic) price dynamics in our model.  All proofs of lemmas appear in the appendix.

\subsection{The Evolution of Pent-up Demand}

We will analyze the pent-up demand within intervals of prices, i.e., for $p'>p$, we are interested in $D^t(p)-D^t(p')$.  
The pent-up demand in the range $[p,p']$ evolves as follows: First, every time step a new amount of $Q(p)-Q(p')$ 
is added to the pent-up demand.  
Whenever some $p^t$ is smaller than $p$ then all of this demand is supplied and the pent-up demand for time $t+1$ 
is zero.  When $p^t$ is larger than $p'$
all of the pent-up demand just remains for the next step and whenever $p < p^t < p'$ then some of this demand is supplied 
and some is remains for the next step. Let us write this down formally.

When $p^{t-1} < p < p'$ we will have no pent-up 
demand $Z^{t-1}(p)=Z^{t-1}(p')=0$ and thus $D^t(p)-D^t(p')=Q(p)-Q(p')$.  When
$p^{t-1} > p' >p$ we have that $Z^{t-1}(p)=D^{t-1}(p)-q^{t-1}$ and $Z^{t-1}(p')=D^{t-1}(p')-q^{t-1}$ 
so $Z^{t-1}(p)-Z^{t-1}(p') = D^{t-1}(p)-D^{t-1}(p')$ and $D^t(p)-D^t(p')=(D^{t-1}(p)-D^{t-1}(p'))+(Q(p)-Q(p'))$.
Finally, when $p \le p^{t-1} \le p'$ we will have $Z^{t-1}(p') = 0$ while $Z^{t-1}(p)=D^{t-1}(p)-q^{t-1}$ and, 
since $D^{t-1}(p') \le q^{t-1} = D^{t-1}(p^{t-1})$, we have $Z^{t-1}(p)-Z^{t-1}(p') \le D^{t-1}(p)-D^{t-1}(p')$.
Note that the last inequality holds in all three cases.

Summing this up over a prefix of times $\{1,2,...t\}$, or over a range of times $\{T+1, T+2, ... t\}$ we get the following lemma.

\begin{lemma} \label{lem-a}
For every $p \le p'$ we have that:
\begin{enumerate}
\item For every $t$ we have that $D^{t}(p)-D^{t}(p') \le t \cdot (Q(p) - Q(p'))$.  
\item For all $T$ and $t>T$ we have that:
$D^{t}(p)-D^{t}(p') \le (t-T) \cdot (Q(p) - Q(p')) + (Z^{T}(p)-Z^{T}(p'))$.  
\item For all $T$ and $t>T$, if for all $t'$ such that $T < t' < t$ we also have that $p^{t'} \ge p' >p$ then in fact we have equality
$D^{t}(p)-D^{t}(p') = (t-T) \cdot (Q(p) - Q(p')) + (Z^{T}(p)-Z^{T}(p'))$. 
\item For all $T$ such that $p^T \le p < p'$ (or $T=0$) and all $t>T$ we have that 
$D^{t}(p)-D^{t}(p') \le (t-T) \cdot (Q(p) - Q(p'))$. 
\item For all $T$ such that $p^T \le p < p'$ (or $T=0$), if for all $t'$ such that $T < t' < t$ we also have that $p^{t'} \ge p' >p$,  then 
$D^{t}(p)-D^{t}(p') = (t-T) \cdot (Q(p) - Q(p'))$. 
\end{enumerate}
\end{lemma}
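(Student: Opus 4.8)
The plan is to reduce all five parts to a single per-step relation and then telescope. The case analysis in the paragraph preceding the lemma already establishes that, in all three cases, $Z^{t-1}(p)-Z^{t-1}(p') \le D^{t-1}(p)-D^{t-1}(p')$; combining this with the defining identity $D^{t}=Z^{t-1}+Q$ yields the master inequality
\[
D^{t}(p)-D^{t}(p') \le \bigl(D^{t-1}(p)-D^{t-1}(p')\bigr) + \bigl(Q(p)-Q(p')\bigr),
\]
valid for every $t$ and every $p<p'$. I would first record the key strengthening that this is an \emph{equality} whenever $p^{t-1}\ge p'$: in that regime $p'$ lies weakly below the monopolist price, so $Z^{t-1}(p')=D^{t-1}(p')-q^{t-1}$ and $Z^{t-1}(p)=D^{t-1}(p)-q^{t-1}$, and the two copies of $q^{t-1}$ cancel. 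I would also note that each $D^{t}$ is non-increasing (since $Q$ is strictly decreasing and $Z^{t-1}$ is non-increasing by a parallel induction), so that the difference on the right is non-negative and Case~1 of the preceding analysis genuinely applies.

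With the master relation in hand, Parts~1 and~2 follow by iteration. For Part~2 I would apply the master inequality from step $t$ down to step $T+1$, picking up one copy of $Q(p)-Q(p')$ at each of the $t-T-1$ transitions and arriving at $D^{T+1}(p)-D^{T+1}(p')$; substituting the exact identity $D^{T+1}(p)-D^{T+1}(p')=(Z^{T}(p)-Z^{T}(p'))+(Q(p)-Q(p'))$ then supplies the final copy and the $Z^{T}$ term, for a total of $t-T$ copies. Part~1 is the special case $T=0$ with $Z^{0}\equiv 0$. For Part~3 I would rerun the same telescoping, observing that the hypothesis $p^{t'}\ge p'$ for all $T<t'<t$ makes every transition invoked fall into the equality case above, so the chain of inequalities collapses into a chain of equalities and Part~2 holds with equality.

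Finally, Parts~4 and~5 follow by specializing the starting point: the hypothesis $p^{T}\le p$ forces $p,p'\ge p^{T}$, whence $Z^{T}(p)=Z^{T}(p')=0$ and the $Z^{T}$ term vanishes; Part~4 is then Part~2 with zero remainder, and Part~5 is Part~3 with the same simplification (again using $Z^{0}\equiv 0$ when $T=0$). The argument is essentially bookkeeping, so I do not expect a substantive obstacle; the only genuine care-points are two boundary verifications — that the master relation remains an equality at the boundary $p^{t-1}=p'$ (where $Z^{t-1}(p')=0$ yet still $D^{t-1}(p')=q^{t-1}$), and that the transitions driving the telescoping in Part~3 are indexed precisely by $p^{T+1},\dots,p^{t-1}$, matching the range $T<t'<t$ in the hypothesis. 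I would state both explicitly rather than leave them implicit.
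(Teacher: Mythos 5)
Your proposal is correct and follows essentially the same route as the paper's own proof: both rest on the per-step relation $D^{t}(p)-D^{t}(p') \le (D^{t-1}(p)-D^{t-1}(p'))+(Q(p)-Q(p'))$ from the case analysis preceding the lemma (with equality when $p^{t-1}\ge p'$), iterate it for Parts~1--3, and dispose of Parts~4 and~5 by noting $Z^{T}(p)=Z^{T}(p')=0$. Your explicit boundary check at $p^{t-1}=p'$ is a nice touch the paper leaves implicit, but it is not a different argument.
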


\subsection{The Serial Monopoly Price and Quantity} \label{sec-seq}

Let us start by looking at the two basic (price,quantity) points of the daily market with demand $Q()$ and supply $s$: the market equilibrium and the monopolist pricing.

{\bf The market equilibrium} point is when supply equals demand. i.e. at the price $p^{eq} = P(s)=Q^{-1}(s)$ 
(where $P()$ is the inverse function of $Q()$, at which point $q^{eq}=Q(p^{eq})=s$, and the social welfare 
$SW^{eq}=\int^{s}_{0} P(q)dq$ is maximized.  In figure 
\ref{fig:sup-dem} the social welfare at equilibrium is visualized the sum of the areas of regions A+B+C+D+E+F+G+H+I.  
The revenue at that point is $REV^{eq}= p^{eq} \cdot s$ which is visualized as the sum of regions D+G+I.

{\bf The monopolist} chooses a price $p^{mon}$ that maximizes $REV^{mon}=p^{mon} \cdot q^{mon}$, where $q^{mon}=Q(p^{mon})$ which on the graph is given by the sum of regions B+C+D.  
The social welfare at this point is $SW^{mon}=\int^{q^{mon}}_{0} P(q)dq$ which in figure 
ref{fig:sup-dem} is given by the sum of regions A+B+C+D.  
As we assume that the monopolists revenue is strictly larger than the market equilibrium revenue $REV^{mon} > REV^{eq}$,
we must have $p^{mon} > p^{eq}$, $q^{mon} < q^{eq}$,  and $SW^{mon} < SW^{eq}$.  
The gap between $REV^{mon}$ and $REV^{eq}$ can be unbounded as the latter may even be $0$ (if the demand is bounded 
by $s$, i.e. when $P(s)=0$).  The gap between $SW^{mon}$ and $SW^{eq}$ is also known to be potentially unbounded.  Specifically, setting $H=P(0)/P(s)$, the gap can be large as $\ln(H)$, but no more.

We now define the ``serial monopoly'' point.  While there is not going to be any convergence of the prices $p^t$, we will still be able to focus on a meaningful definition of $(p^{ser},q^{ser})$ that captures useful information about
the prices and quantities in the long term of 
the dynamic.  We will define $p^{ser}$ as the price at which selling all the supply would give the monopoly revenue.  

\begin{definition}
The serial monopoly price and quantity are defined as $p^{ser}= p^{mon} \cdot q^{mon} / s$;  $q^{ser}=Q(p^{ser})$.
\end{definition}

Since $q^{mon}<s$ we have that $p^{ser} < p^{mon}$ and $q^{ser} > q^{mon}$.
We also have $p^{ser} > p^{eq}$ and $q^{ser} < s$ since otherwise we would 
have $p^{eq} \cdot s = p^{ser} \cdot s = p^{mon} \cdot q^{mon}$,
contradicting our assumption that the monopolist's revenue is strictly larger than the equilibrium revenue.

\vspace{0.1in}	
\noindent
{\bf Comment:} In the more general case where $Q()$ is not strictly decreasing, the definition of $p^{ser}$ 
should be corrected to be
the largest value $p^{ser}$
such that $Q(p^{ser})=Q(p^{mon} \cdot q^{mon} / s)$; we will continue to assume that $Q$ is 
strictly decreasing so for us this 
complication is superfluous.

\subsection{Basics of Price Dynamics}

The first easy lemma states that $p^{ser}$ is a clear lower bound for any $p^t$.  
This also provides the intuition for the particular choice of $p^{ser}$.

\begin{lemma} \label{A1}
For every $t$ we have that $p^t \ge p^{ser}$.
\end{lemma}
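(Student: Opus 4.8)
The plan is to show that, regardless of how much pent-up demand has accumulated by day $t$, the day-$t$ monopolist can always secure at least the one-shot monopoly revenue $REV^{mon}=p^{mon}\cdot q^{mon}$, and then to convert this revenue lower bound into the desired price lower bound using the supply cap $s$. The only structural fact about the dynamics that the argument needs is the pointwise domination $D^t(p)=Z^{t-1}(p)+Q(p)\ge Q(p)$, which holds for every $p$ and every $t$ because pent-up demand is nonnegative.

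First I would use this domination to bound the monopolist's optimal revenue from below. Writing $REV^t=p^t\cdot q^t$ for the revenue the day-$t$ monopolist actually extracts (recall $q^t=D^t(p^t)\le s$, so $\min(s,D^t(p^t))=q^t$), the fact that $p^t$ maximizes $p\mapsto p\cdot\min(s,D^t(p))$ means we may lower-bound $REV^t$ by the value of this objective at any particular price. Evaluating at $p=p^{mon}$ and using the domination together with the monotonicity of $x\mapsto\min(s,x)$ gives
$$REV^t \ge p^{mon}\cdot\min(s,D^t(p^{mon})) \ge p^{mon}\cdot\min(s,Q(p^{mon})) = p^{mon}\cdot\min(s,q^{mon}).$$
Here I would invoke the standing assumption that the monopoly revenue strictly exceeds the equilibrium revenue, which (as already noted in the excerpt) forces $q^{mon}<q^{eq}=s$, so that $\min(s,q^{mon})=q^{mon}$ and hence $REV^t\ge p^{mon}q^{mon}$.

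Finally, since the quantity sold each day is capped by the supply, $q^t\le s$, I would conclude $p^t\cdot s \ge p^t\cdot q^t = REV^t \ge p^{mon}q^{mon}$, and dividing by $s$ yields $p^t \ge p^{mon}q^{mon}/s = p^{ser}$, as claimed. The argument has essentially no obstacle; the only point requiring care is ensuring $\min(s,q^{mon})=q^{mon}$, i.e.\ that the single-day monopoly does not itself exhaust the supply, which is exactly what the revenue-gap assumption guarantees. This also clarifies the choice of $p^{ser}$: it is precisely the price at which selling the full supply $s$ realizes the monopoly revenue, so the monopolist's ability to always match $REV^{mon}$ is what pins every $p^t$ above $p^{ser}$.
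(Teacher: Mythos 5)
Your proof is correct and follows essentially the same route as the paper's: both rest on the two observations that the monopoly revenue $p^{mon}\cdot q^{mon}$ is always achievable (since $D^t(p^{mon})\ge Q(p^{mon})=q^{mon}$ and $q^{mon}<s$) and that any price $p$ yields revenue at most $p\cdot s$, so the optimizer cannot lie below $p^{ser}$. The only difference is presentational -- the paper argues contrapositively that any $p<p^{ser}$ is strictly dominated by $p^{mon}$, while you bound the realized revenue directly -- and your version spells out the $\min(s,q^{mon})=q^{mon}$ detail that the paper leaves implicit.
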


Using lemma \ref{lem-a} (5) with $T=0$ we see that all demand below $p^{ser}$ remains as pent-up demand.

\begin{corollary} \label{cor-b}
For every $p<p^{ser}$ and for all $t$, $D^{t}(p)-D^{t}(p^{ser}) = t \cdot (Q(p)-Q(p^{ser}))$.
\end{corollary}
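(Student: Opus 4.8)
The plan is to obtain this identity as an immediate consequence of the two results just established, namely \cref{A1} together with part (5) of \cref{lem-a}. First I would instantiate part (5) of \cref{lem-a} with the parameters $T=0$ and $p'=p^{ser}$, keeping the arbitrary $p<p^{ser}$ fixed. The ``$T=0$'' branch of that lemma's hypothesis is available without any side condition on $p^T$, so the only thing left to verify is the per-step requirement that $p^{t'}\ge p^{ser}>p$ for every intermediate time $0<t'<t$.

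The first of these two inequalities is precisely the content of \cref{A1}, which guarantees $p^{t'}\ge p^{ser}$ at \emph{every} time step, and the second, $p^{ser}>p$, is the standing hypothesis of the corollary. Feeding these into part (5) of \cref{lem-a} then yields $D^{t}(p)-D^{t}(p^{ser}) = (t-0)\cdot(Q(p)-Q(p^{ser})) = t\cdot(Q(p)-Q(p^{ser}))$, which is the asserted equality.

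Conceptually, this step records that because the serial prices never dip below $p^{ser}$, the slice of demand strictly below $p^{ser}$ is never served by any monopolist, so the fresh daily increment $Q(p)-Q(p^{ser})$ in the band $[p,p^{ser}]$ simply piles up without attrition, giving \emph{exact} linear growth rather than the mere inequality of part (2) of \cref{lem-a}. I expect essentially no obstacle here: the only point genuinely worth checking is that the \emph{uniform} lower bound supplied by \cref{A1} lines up exactly with the \emph{per-time-step} hypothesis that part (5) of \cref{lem-a} requires, so that the general accumulation inequality is upgraded to an equality on the whole band beneath $p^{ser}$.
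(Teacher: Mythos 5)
Your proposal is correct and matches the paper's own argument: the paper likewise derives the corollary by applying \cref{lem-a}(5) with $T=0$ and $p'=p^{ser}$, with \cref{A1} supplying the required hypothesis that $p^{t'}\ge p^{ser}$ at every intermediate step. Your write-up merely makes explicit the instantiation that the paper leaves as a one-line remark.
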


We next analyze the price movement, where the easy, but perhaps surprising observations is that prices always decrease, unless they ``jump up'' to the monopoly price (and then start decreasing again).

\begin{lemma} \label{lem-b1}
For every $t$ either $p^t=p^{mon}$ or $p^t < p^{t-1}$.
\end{lemma}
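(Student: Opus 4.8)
I would prove the contrapositive: \emph{if $p^t \ge p^{t-1}$ then $p^t = p^{mon}$.} (The case $t=1$ is trivial since $Z^0\equiv 0$ gives $D^1=Q$ and hence $p^1=p^{mon}$ directly, so I take $t\ge 2$.) Write the day-$t$ revenue as $R^t(p)=p\cdot\min(s,D^t(p))$ and let $g(p)=p\cdot\min(s,Q(p))$ be the \emph{base} revenue of the daily market, whose global maximum is exactly $REV^{mon}$ (attained at $p^{mon}$; note $g(p^{eq})=REV^{eq}<REV^{mon}$, so the optimum is not at the equilibrium price). The single structural fact driving everything is that pent-up demand vanishes above the previous price: by definition $Z^{t-1}(p)=0$ for $p\ge p^{t-1}$, hence $D^t(p)=Q(p)$ and therefore $R^t(p)=g(p)$ for every $p\ge p^{t-1}$.

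The heart of the argument is a sandwich on the optimal revenue under the hypothesis $p^t\ge p^{t-1}$. For the \emph{upper} side, since $p^t\ge p^{t-1}$ lies in the region where $R^t$ coincides with $g$, we get $R^t(p^t)=g(p^t)\le REV^{mon}$, the last inequality being just the definition of $REV^{mon}$ as the global maximum of $g$. For the \emph{lower} side, I observe that the monopolist can always secure at least $REV^{mon}$ on any day by naming the price $p^{mon}$: because $D^t(p^{mon})=Z^{t-1}(p^{mon})+Q(p^{mon})\ge Q(p^{mon})=q^{mon}$ and $q^{mon}<s$, the realized quantity $\min(s,D^t(p^{mon}))$ is at least $q^{mon}$ (whether or not $D^t(p^{mon})$ exceeds $s$), so $R^t(p^{mon})\ge p^{mon}q^{mon}=REV^{mon}$. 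Since $p^t$ maximizes $R^t$, this forces $R^t(p^t)\ge REV^{mon}$. Combining the two bounds yields $g(p^t)=R^t(p^t)=REV^{mon}$.

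It remains to conclude $p^t=p^{mon}$ from $g(p^t)=REV^{mon}$. Using \cref{A1} together with the already-established strict inequality $p^{ser}>p^{eq}$, I have $p^t\ge p^{t-1}\ge p^{ser}>p^{eq}$, so $Q(p^t)\le s$ and thus $g(p^t)=p^t\,Q(p^t)$; the equality $p^t\,Q(p^t)=REV^{mon}$ then says $p^t$ is a maximizer of $p\mapsto p\,Q(p)$, i.e.\ a monopoly price, and hence $p^t=p^{mon}$. The only delicate point, and the step I expect to require the most care, is the final \emph{identification} of $p^t$ with $p^{mon}$: this uses that the monopoly price is the unique maximizer of $p\,Q(p)$ (the standing assumption on $Q$), and I would also double-check the edge case in the lower bound where $D^t(p^{mon})\ge s$ so that the supply cap, rather than $q^{mon}$, binds at $p^{mon}$ — in that situation the realized quantity is $s>q^{mon}$ and the guaranteed revenue is only larger, so the argument is unaffected.
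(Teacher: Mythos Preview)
Your proof is correct and follows essentially the same route as the paper: both hinge on the observation that $Z^{t-1}(p)=0$ for $p\ge p^{t-1}$, so $R^t$ coincides with the base revenue $g$ on $[p^{t-1},\infty)$, and hence an optimizer $p^t$ lying in that range must realize the monopoly revenue. The only cosmetic difference is in the final identification step: you invoke uniqueness of the maximizer of $p\,Q(p)$ (which the paper does not actually list among its standing assumptions on $Q$), whereas the paper instead appeals to a consistent tie-breaking rule to select $p^{mon}$ among possibly several maximizers --- your detour through \cref{A1} to get $p^t>p^{eq}$ and hence $g(p^t)=p^tQ(p^t)$ is likewise not needed under the paper's tie-breaking formulation.
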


This in particular implies that prices never go above $p^{mon}$.

\begin{corollary} \label{A2}
For all $t$: $p^t \le p^{mon}$.
\end{corollary}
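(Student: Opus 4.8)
The plan is to prove \cref{A2} by induction on $t$, with \cref{lem-b1} supplying essentially the entire inductive step. I would take as inductive hypothesis that $p^{t-1} \le p^{mon}$ and let the two cases of \cref{lem-b1} propagate it: if $p^t = p^{mon}$ then trivially $p^t \le p^{mon}$, while if $p^t < p^{t-1}$ then $p^t < p^{t-1} \le p^{mon}$. Since the comparison $p^t < p^{t-1}$ in \cref{lem-b1} presupposes a previous price, that lemma has no content at $t=1$, so the only genuine piece of work is the base case.

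For the base case I would compute $p^1$ directly. Because $Z^0 \equiv 0$ we have $D^1 = Q$, so the day-$1$ monopolist maximizes $p \cdot \min(s, Q(p))$, and I would split the price axis at $p^{eq}$, where $Q(p^{eq}) = s$. For $p \le p^{eq}$ monotonicity of $Q$ gives $Q(p) \ge s$, so the objective is $p \cdot s$, which is strictly increasing and hence maximized at the right endpoint $p^{eq}$. For $p \ge p^{eq}$ we have $Q(p) \le s$, so the $\min$ is inactive and the objective is the ordinary monopoly revenue $p \cdot Q(p)$, maximized by definition at $p^{mon}$ (and $p^{mon} \ge p^{eq}$ by the standing assumption $p^{mon} > p^{eq}$). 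Comparing the two regimes, and using $q^{mon} = Q(p^{mon}) < s$ together with $REV^{mon} > REV^{eq}$ to rule out the left endpoint, yields $p^1 = p^{mon}$, which in particular gives $p^1 \le p^{mon}$. This matches the worked example, where the first monopolist indeed charges $p^1 = 0.5 = p^{mon}$.

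I do not anticipate any real obstacle: the corollary is an immediate descendant of the already-proved \cref{lem-b1}, and the single point needing a moment's care is confirming that the extra $\min(s,\cdot)$ in the serial monopolist's objective cannot drag the day-$1$ price below or above $p^{mon}$ — it cannot push it above because $p \cdot Q(p)$ is maximized at $p^{mon}$, and it cannot push it below because $q^{mon} < s$ means supply is not binding at the monopoly price. If one prefers, adopting the convention $p^0 = p^{mon}$ (or $p^0 = +\infty$) folds the base case into the induction, and the whole argument collapses to the two-case step driven by \cref{lem-b1}.
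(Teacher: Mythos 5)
Your proof is correct and is essentially the paper's argument: the paper presents \cref{A2} as an immediate consequence of \cref{lem-b1}, with the induction you spell out left implicit. Your explicit treatment of the base case (that $p^1=p^{mon}$ because $D^1=Q$ and $REV^{mon}>REV^{eq}$ makes the $\min(s,\cdot)$ constraint non-binding at the optimum) is a detail the paper handles only implicitly in its model discussion and worked example, so there is no substantive difference in approach.
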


The next lemma shows that whenever there is sufficient pent-up demand above $p^{ser}$ then 
the price cannot go up to $p^{mon}$ and furthermore the decrease in the prices is significant - when measured in terms of of the density of the demand.

\begin{lemma} \label{lem-b2}
Assume that for some $p > p^{ser}$ we have that $D^t(p) \ge s$ then $p^t < p^{t-1}$ and
$(Q(p^{t})-Q(p^{t-1})) \ge (t-1)^{-1} \cdot s \cdot (p-p^{seq}) / p^{mon}$.  
Furthermore, if for some $T<t$ we had $p^T \le p^t$ then actually 
$(Q(p^{t})-Q(p^{t-1})) \ge (t-1-T)^{-1} \cdot s \cdot (p-p^{ser}) / p^{mon}$.
\end{lemma}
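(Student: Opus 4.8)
The plan is to prove the two assertions in order, with the strict decrease feeding into the quantitative bound. Throughout write $R(\cdot)$ for the day-$t$ revenue $R(q)=q\cdot\min(s,D^t(q))$, and recall from the model that at the optimum $q^t=D^t(p^t)\le s$, so $R(p^t)=p^t q^t$. The whole argument rests on two consequences of the hypothesis $D^t(p)\ge s$: selling at $p$ clears the full supply, so $R(p^t)\ge R(p)=p\cdot s$, and since $p>p^{ser}=p^{mon}q^{mon}/s$ this gives the clean inequality $p^t q^t=R(p^t)\ge ps>p^{ser}s=p^{mon}q^{mon}$.

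For the first claim I would rule out the only alternative to $p^t<p^{t-1}$ allowed by \cref{lem-b1}, namely $p^t=p^{mon}$. Suppose $p^t=p^{mon}$. Then $R(p^t)=p^{mon}D^t(p^{mon})$ with $D^t(p^{mon})\le s$, and the inequality above forces $D^t(p^{mon})>q^{mon}=Q(p^{mon})$, i.e.\ strictly positive pent-up demand $Z^{t-1}(p^{mon})=D^t(p^{mon})-Q(p^{mon})>0$. But $Z^{t-1}$ vanishes at and above $p^{t-1}$, so $Z^{t-1}(p^{mon})>0$ yields $p^{mon}<p^{t-1}$, contradicting $p^{t-1}\le p^{mon}$ from \cref{A2}. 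Hence $p^t<p^{t-1}$.

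For the quantitative bound, the key identity is $q^t-Q(p^t)=Z^{t-1}(p^t)$, and because $p^t<p^{t-1}$ (so $Z^{t-1}(p^{t-1})=0$) we have $Z^{t-1}(p^t)=D^{t-1}(p^t)-D^{t-1}(p^{t-1})$. Applying \cref{lem-a}(1) at time $t-1$ on the interval $[p^t,p^{t-1}]$ bounds this by $(t-1)\bigl(Q(p^t)-Q(p^{t-1})\bigr)$, which rearranges to $Q(p^t)-Q(p^{t-1})\ge \bigl(q^t-Q(p^t)\bigr)/(t-1)$. The decisive step is then to estimate $q^t-Q(p^t)$ by bounding its two terms from opposite sides but over a common denominator $p^t$: the revenue inequality gives $q^t\ge ps/p^t$, while the definition of the monopoly price (it maximizes $q\mapsto qQ(q)$) gives $p^tQ(p^t)\le p^{mon}q^{mon}=p^{ser}s$, i.e.\ $Q(p^t)\le p^{ser}s/p^t$. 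Subtracting, $q^t-Q(p^t)\ge s(p-p^{ser})/p^t\ge s(p-p^{ser})/p^{mon}$, the last inequality using $p^t\le p^{mon}$ (\cref{A2}) and $p>p^{ser}$; combining with the previous display proves the stated bound.

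For the sharper ``furthermore'' bound under the extra hypothesis $p^T\le p^t$, the only change is to replace \cref{lem-a}(1) by \cref{lem-a}(4): since $p^T\le p^t<p^{t-1}$, part~4 at time $t-1$ on $[p^t,p^{t-1}]$ gives $D^{t-1}(p^t)-D^{t-1}(p^{t-1})\le(t-1-T)\bigl(Q(p^t)-Q(p^{t-1})\bigr)$, and the remainder is verbatim, producing the factor $(t-1-T)^{-1}$ (note $p^t<p^{t-1}$ forces $T\le t-2$, so the denominator is positive). I expect the main obstacle to be finding the pairing in the third paragraph: a naive estimate that replaces $q^t$ by $ps/p^{mon}$ but leaves $Q(p^t)$ uncontrolled fails, since $Q(p^t)>q^{mon}$ whenever $p^t<p^{mon}$; the resolution is to use monopoly optimality to dominate $Q(p^t)$ by $p^{ser}s/p^t$, so that both estimates carry the same $1/p^t$ factor and telescope exactly to $s(p-p^{ser})/p^t$.
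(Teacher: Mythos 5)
Your proposal is correct and follows essentially the same route as the paper's proof: rule out $p^t=p^{mon}$ via the revenue comparison $p\cdot s > p^{ser}\cdot s = p^{mon}q^{mon}$ (invoking \cref{lem-b1} and \cref{A2}), then combine revenue optimality of $p^t$ against $p$, the monopoly bound $p^t Q(p^t)\le p^{mon}q^{mon}$, and \cref{lem-a}(1) (respectively \cref{lem-a}(4) for the ``furthermore'' part) to lower-bound $Z^{t-1}(p^t)$ by $s(p-p^{ser})/p^{mon}$. The only cosmetic difference is that you normalize the revenue inequalities by $1/p^t$ where the paper multiplies through by $p^t$ and bounds $p^t\le p^{mon}$ on the pent-up-demand term; the content is identical.
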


\subsection{Sequences of decreasing price steps}

The first simple lemma states that as long as prices remain above some threshold $p'$ then the demand for all lower prices $p$ just keeps being pent-up until it reaches as high a quantity as we desire which in our case is $s$.

\begin{lemma} \label{lem-c}
For every $p<p'$ there exists $\Delta_0$ such that for all $T$ and all $\Delta \ge \Delta_0$ we have that
either (a) there exists $T \le t \le T+\Delta$ with $p^t < p'$ or (b) $D^{T+\Delta}(p) \ge s$.
\end{lemma}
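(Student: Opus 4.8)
The plan is to prove the stated dichotomy in its contrapositive form: fix $p<p'$, suppose that option (a) fails over the window $[T,T+\Delta]$ (so the price never dips below $p'$), and derive option (b), namely $D^{T+\Delta}(p)\ge s$, provided $\Delta$ exceeds a threshold that depends only on $p,p'$ and $s$. Accordingly, I would first set
\[
\Delta_0 = \left\lceil \frac{s}{Q(p)-Q(p')} \right\rceil,
\]
which is a well-defined finite positive integer because $Q$ is strictly decreasing and $p<p'$, so $Q(p)-Q(p')>0$.

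The key step is to observe that the failure of (a) supplies exactly the price hypothesis needed to turn the pent-up demand bound into an equality. Indeed, if (a) fails then $p^t\ge p'$ for every $t$ with $T\le t\le T+\Delta$, so in particular $p^{t'}\ge p' > p$ for every $t'$ in the open range $T<t'<T+\Delta$. This is precisely the hypothesis of Lemma~\ref{lem-a}(3) with $t=T+\Delta$, and it yields the equality
\[
D^{T+\Delta}(p)-D^{T+\Delta}(p') = \Delta\cdot(Q(p)-Q(p')) + \bigl(Z^{T}(p)-Z^{T}(p')\bigr).
\]

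To finish I would discard the two remaining non-negative terms. The pent-up demand function $Z^{T}$ is non-increasing in price (it equals $D^{T}-q^{T}$ below $p^{T}$, where $D^{T}$ is decreasing, and $0$ above $p^{T}$), so $Z^{T}(p)\ge Z^{T}(p')\ge 0$; moreover $D^{T+\Delta}(p')\ge 0$ trivially. Combining these with $\Delta\ge\Delta_0$ and the choice of $\Delta_0$ gives
\[
D^{T+\Delta}(p) \ge \Delta\cdot(Q(p)-Q(p')) \ge \Delta_0\cdot(Q(p)-Q(p')) \ge s,
\]
which is exactly (b), completing the dichotomy.

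As for difficulty, there is no genuine obstacle here: the lemma is essentially a direct reading of the linear accumulation of pent-up demand recorded in Lemma~\ref{lem-a}. The only two points that require care are (i) invoking the \emph{equality} in part (3) rather than the mere upper bound in part (2) --- we need a \emph{lower} bound on $D^{T+\Delta}(p)$, which is available precisely because the failure of (a) furnishes the price condition that part (3) demands --- and (ii) checking the monotonicity of $Z^{T}$ so that the correction term $Z^{T}(p)-Z^{T}(p')$ can be dropped with the correct (non-negative) sign.
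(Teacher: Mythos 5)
Your proof is correct and takes essentially the same route as the paper's: both read off the failure of (a) as the hypothesis of Lemma~\ref{lem-a}(3), drop the nonnegative terms $D^{T+\Delta}(p')$ and $Z^{T}(p)-Z^{T}(p')$, and set $\Delta_0 = s/(Q(p)-Q(p'))$. The only differences are cosmetic --- you take a ceiling and explicitly verify the monotonicity of $Z^{T}$, which the paper leaves implicit.
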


We now reach the key part of the characterization showing that prices indeed approach $p^{ser}$ (decreasing from above).

\begin{lemma} \label{lem-main}
For every $p^* > p^{ser}$ there exists $\Delta$ such that for every $T$ there exists some $T < t \le T+\Delta$ with $p^t \le p^*$.  
\end{lemma}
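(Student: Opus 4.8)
The plan is to prove the contrapositive form: exhibit a uniform bound $M^*$, depending only on $p^*$, $Q$, and $s$, on the length of any maximal run of consecutive steps whose prices all exceed $p^*$; the lemma then follows by taking $\Delta \eqdef M^*+1$, since any window of that length must then contain some step with $p^t \le p^*$. If $p^* \ge p^{mon}$ the claim is immediate from \cref{A2}, so assume $p^{ser} < p^* < p^{mon}$ and fix an auxiliary price $p$ with $p^{ser} < p < p^*$. Consider a maximal run $a < t \le a+M$ with $p^{t} > p^*$ throughout (so $a=0$ or $p^{a}\le p^*$), and aim to bound $M$.

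First I would feed this run into \cref{lem-c} with the pair $p < p' = p^*$: since case (a) never occurs inside the run, case (b) must, giving $D^{t}(p) \ge s$ for every $t \ge a+\Delta_0$ in the run, where $\Delta_0$ is the constant supplied by that lemma. For each such $t$, \cref{lem-b2} applies with our $p>p^{ser}$ and forces $p^{t}<p^{t-1}$; thus from step $a+\Delta_0$ onward the run is a clean, strictly decreasing price sequence. Crucially, I would invoke the \emph{furthermore} clause of \cref{lem-b2} with the anchor $T=a$: since $p^{a}\le p^* < p^{t}$ the hypothesis $p^{a}\le p^{t}$ holds, yielding the per-step lower bound $Q(p^{t})-Q(p^{t-1}) \ge c\,(t-1-a)^{-1}$ with $c \eqdef s\,(p-p^{ser})/p^{mon}>0$.

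Next I would sum these increments over the clean portion of the run. The left-hand side telescopes to $Q(p^{a+M})-Q(p^{a+\Delta_0-1})$, which is at most the fixed budget $B \eqdef Q(p^*)-q^{mon}$, because $p^{a+M}>p^*$ gives $Q(p^{a+M})<Q(p^*)$ while $p^{a+\Delta_0-1}\le p^{mon}$ (by \cref{A2}) gives $Q(p^{a+\Delta_0-1})\ge q^{mon}$. The right-hand side is the harmonic-type sum $c\sum_{j=\Delta_0-1}^{M-1} j^{-1} \ge c\,\ln\!\bigl(M/\Delta_0\bigr)$. Comparing the two forces $c\ln(M/\Delta_0)\le B$, i.e.\ $M \le \Delta_0\,e^{B/c}$, a bound depending only on $p^*$, $Q$, and $s$ and not on the absolute location $a$ of the run. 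Taking $M^* \eqdef \lceil \Delta_0 e^{B/c}\rceil$ finishes the estimate; the boundary case $a=0$ is identical, using the unconditional factor $(t-1)^{-1}$ of \cref{lem-b2}, which coincides with the anchored factor for $T=0$.

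The main obstacle is precisely that the per-step decrease guaranteed by \cref{lem-b2} degrades like $(t-1)^{-1}$ in absolute time, so a naive summation would bound a run's length only in terms of how late it occurs — useless for a time-uniform $\Delta$. The resolution, and the technical heart of the plan, is to anchor the decrease estimate at the \emph{start} $a$ of the high-price run (legitimate because $p^{a}\le p^*$ lies below every price of the run), which replaces $t-1$ by $t-1-a$ and thereby makes the entire estimate relative to the run itself. The resulting sum grows (only logarithmically, yet) without bound in $M$, so it must eventually exceed the fixed demand budget $B$, and this is exactly what caps $M$ independently of $a$.
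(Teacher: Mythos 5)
Your proposal is correct and follows essentially the same route as the paper's own proof: accumulate pent-up demand at an auxiliary price $p\in(p^{ser},p^*)$ via \cref{lem-c}, apply the anchored (``furthermore'') estimate of \cref{lem-b2} relative to the start of the high-price run, and let the divergent harmonic sum collide with the bounded total variation of $Q$ between $p^{mon}$ and $p^{ser}$. The only differences are cosmetic: you phrase it as a uniform bound on maximal run lengths rather than a direct contradiction, and you use the budget $Q(p^*)-q^{mon}$ in place of the paper's $Q(p^{ser})-Q(p^{mon})$.
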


Let us say a word of intuition.  Lemma \ref{lem-b2} shows that when we are in a sequence of 
decreasing prices $p^t$, each time the rate of decrease -- when measured in terms of $Q()$ -- is proportional to $1/t$ (or even
$1/(t-T)$ where $T$ is the last time the price was below a threshold that we are aiming for).  
Thus, as the series $1/t$ diverges, we cannot have an infinite sequence of decreasing prices until we go below 
our desired threshold.  The formal proof appears in the appendix.

\subsection{Prices need to jump up}

We start by stating the obvious fact that demand that was not supplied remains as pent-up demand:

\begin{lemma} \label{lem-q}
For any time $T$  
we have that  $\sum_{t=1}^{T-1} q^t \ge T \cdot Q(p) - D^T(p)$.  If $p \le \min_{t < T} p^t$ then we have equality.
\end{lemma}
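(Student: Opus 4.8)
The plan is to prove the statement by a conservation (accounting) argument carried out at a single, fixed price level $p$. The key observation is that, restricted to price level $p$, the pent-up demand $Z^t(p)$ behaves exactly like the backlog of a single-server queue: each day a fixed mass $Q(p)$ of level-$p$ demand arrives, up to $q^t$ units of it may be cleared, and whatever is left over is carried into the next day.

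First I would make this queue recursion explicit. From $D^t(p)=Z^{t-1}(p)+Q(p)$ together with the model's definition of $Z^t$ (namely $Z^t(p)=D^t(p)-q^t$ when $p\le p^t$ and $Z^t(p)=0$ when $p\ge p^t$), one checks in both cases that
\[
  Z^t(p)=\max\bigl(0,\ D^t(p)-q^t\bigr)=\max\bigl(0,\ Z^{t-1}(p)+Q(p)-q^t\bigr).
\]
The case split is the heart of the verification: when $p\le p^t$ we have $D^t(p)-q^t=D^t(p)-D^t(p^t)\ge 0$, whereas when $p>p^t$ the same quantity is negative and the backlog resets to $0$. An immediate consequence is that the mass actually cleared at level $p$ on day $t$ is $D^t(p)-Z^t(p)=\min\bigl(q^t,D^t(p)\bigr)\le q^t$, with equality precisely when $q^t\le D^t(p)$, i.e.\ (since $D^t$ is decreasing and $q^t=D^t(p^t)$) precisely when $p^t\ge p$.

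Next I would telescope. Summing the identity $D^t(p)-Z^t(p)=\min(q^t,D^t(p))$ over $t=1,\dots,T-1$ and substituting $D^t(p)=Z^{t-1}(p)+Q(p)$ collapses the $Z$-terms: using $Z^0(p)=0$ the sum of the cleared masses equals $(T-1)Q(p)-Z^{T-1}(p)$. Finally I would rewrite the remaining backlog via $Z^{T-1}(p)=D^T(p)-Q(p)$, which turns $(T-1)Q(p)-Z^{T-1}(p)$ into $T\cdot Q(p)-D^T(p)$. Combining this with the per-day bound $\min(q^t,D^t(p))\le q^t$ yields $\sum_{t=1}^{T-1}q^t\ge T\cdot Q(p)-D^T(p)$, and the inequality is an equality exactly when every per-day bound is tight, i.e.\ when $p^t\ge p$ for all $t<T$, which is the stated condition $p\le\min_{t<T}p^t$.

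I do not expect a serious obstacle: once the queue recursion is in place the argument is a one-line telescoping. The only point that needs care is the two-case verification of the served-mass identity $D^t(p)-Z^t(p)=\min(q^t,D^t(p))$ and reading off its tightness condition, since this is where the monopolist occasionally dipping below $p$ (so that $q^t>D^t(p)$ and all level-$p$ demand is cleared with room to spare) is exactly what makes the inequality strict. One should also keep the day bookkeeping straight — the sum runs to $T-1$ while the right-hand side features $D^T(p)$ — which is accounted for precisely by the shift $Z^{T-1}(p)=D^T(p)-Q(p)$.
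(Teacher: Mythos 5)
Your proof is correct and is essentially the paper's own argument in unrolled form: the paper establishes the same per-day fact $Z^{t}(p)\ge D^{t}(p)-q^{t}$ (equivalently, the level-$p$ mass cleared on day $t$ is at most $q^{t}$, with equality exactly when $p\le p^{t}$) and packages it as an induction on $T$, which your conservation identity $\sum_{t=1}^{T-1}\min\bigl(q^t,D^t(p)\bigr)=T\cdot Q(p)-D^T(p)$ obtains instead by telescoping. The two differ only in presentation, and your bookkeeping (including the shift $Z^{T-1}(p)=D^{T}(p)-Q(p)$ and the equality condition) is accurate.
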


In particular, $\sum_{t=1}^{T-1} q^t = T \cdot Q(p^{ser}) - D^T(p^{ser})$.
We are now ready to show that prices must jump up to $p^{mon}$ infinitely often.

\begin{lemma} \label{lem-up}
There exists infinitely many $t$ such that $p^t=p^{mon}$.
\end{lemma}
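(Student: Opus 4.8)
The plan is to argue by contradiction: suppose instead that $p^t = p^{mon}$ holds for only finitely many $t$, say $p^t \neq p^{mon}$ for all $t > T_0$. By \cref{lem-b1} this forces $p^t < p^{t-1}$ for every $t > T_0$, so the tail $(p^t)_{t>T_0}$ is strictly decreasing; since it is bounded below by $p^{ser}$ (\cref{A1}) it converges to some limit $p^\infty \ge p^{ser}$. I would first show $p^\infty = p^{ser}$. If instead $p^\infty > p^{ser}$, choose any $p^*$ with $p^{ser} < p^* < p^\infty$; then \cref{lem-main} supplies arbitrarily large $t$ with $p^t \le p^*$, contradicting the fact that a decreasing sequence converging to $p^\infty$ stays strictly above $p^* < p^\infty$. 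Hence $p^t \downarrow p^{ser}$.

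The crux is a revenue lower bound that explains the special role of $p^{ser}$. For every $t$ the monopolist could have posted the price $p^{mon}$ and collected $p^{mon}\cdot \min(s, D^t(p^{mon}))$; since $D^t \ge Q$ and $q^{mon} = Q(p^{mon}) < s$, this is at least $p^{mon} q^{mon}$. As the monopolist maximizes revenue and $q^t = D^t(p^t) \le s$, we obtain $p^t q^t \ge p^{mon} q^{mon} = p^{ser} s$, using the definition $p^{ser} = p^{mon} q^{mon}/s$. Therefore $q^t \ge p^{ser} s / p^t$, and since $p^t \to p^{ser}$ the right-hand side tends to $s$; as also $q^t \le s$, this forces $q^t \to s$.

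Finally I would contradict the quantity-conservation identity. By \cref{lem-q} applied at $p = p^{ser}$ (legitimate since $p^{ser} \le \min_{t<T} p^t$ by \cref{A1}) we have $\sum_{t=1}^{T-1} q^t = T\, q^{ser} - D^T(p^{ser}) \le T\, q^{ser}$, using $D^T(p^{ser}) \ge 0$. On the other hand, fixing $\varepsilon > 0$ and $T_1$ with $q^t \ge s - \varepsilon$ for all $t > T_1$ (possible since $q^t \to s$) gives $\sum_{t=1}^{T-1} q^t \ge (T-1-T_1)(s-\varepsilon)$. Dividing both estimates by $T$ and letting $T \to \infty$ yields $s - \varepsilon \le q^{ser}$, and since $\varepsilon$ was arbitrary, $s \le q^{ser}$, contradicting $q^{ser} < s$ established in \cref{sec-seq}. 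This contradiction shows that $p^t = p^{mon}$ must occur infinitely often.

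I expect the main obstacle to be recognizing the right quantity to track: the revenue floor $p^t q^t \ge p^{ser} s$ is precisely what couples the price limit $p^{ser}$ to the quantity limit $s$, and it is only by combining it with the conservation identity of \cref{lem-q} — which caps the long-run average supplied quantity at $q^{ser} < s$ — that the two estimates collide. The convergence $p^t \downarrow p^{ser}$ is routine given \cref{lem-b1,A1,lem-main}; the genuine insight is that forbidding upward jumps forces the daily supplied quantity to inflate toward $s$, which is incompatible with the total supply only keeping pace with $q^{ser}$ per day.
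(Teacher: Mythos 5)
Your proof is correct and follows essentially the same route as the paper's: contradiction via \cref{lem-b1} and \cref{lem-main} to force prices to decrease toward $p^{ser}$, the revenue floor $p^t q^t \ge p^{mon}q^{mon} = p^{ser}s$ to force $q^t$ toward $s$, and \cref{lem-q} at $p^{ser}$ to cap the cumulative supplied quantity at roughly $q^{ser} < s$ per day. The only difference is cosmetic: you phrase the collision as a limit argument ($q^t \to s$), while the paper fixes intermediate constants $q^{ser} < q^* < s$ and $p^*$ with $p^*q^* < p^{ser}s$ and compares $(T-T_0)q^*$ against $(T+1)q^{ser}$.
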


\subsection{Putting it all Together}

Let us now see how we have proved theorem \ref{thm-dyn}.

\begin{proof} ({\bf of Theorem \ref{thm-dyn}})
Lemma \ref{A1} and \ref{A2} provide the upper and lower bounds on $p^t$ while lemmas \ref{lem-main} and \ref{lem-up} prove that these are tight as $t \rightarrow \infty$.

Lemma \ref{lem-main} provides the bound on the number of steps during which $p^t$ can lie above $p$.  Clearly once $p^t \le p$ the demand 
at this price is completely supplied, $Z^t(p)=0$.
\end{proof}

\section{Welfare Analysis}

As we have so far been able to prove the demand that is supplied by the serial monopoly dynamics is exactly that with $p > p^{ser}$, we get that the social welfare achieved is given by $\int_{0}^{q^{ser}} P(q) dq$, depicted
in figure \ref{fig:sup-dem} by the sum of regions A+B+C+D+E+F+G.  Using the way we defined $p^{ser}$ we now get an approximate welfare result.

\begin{lemma} \label{lem-sw}
For every demand function $Q$ and supply level $s$ we have that the welfare obtained 
by serial monopoly is at least half of the social welfare at equilibrium.
\end{lemma}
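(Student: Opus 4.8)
The plan is to compare the serial-monopoly welfare $SW^{ser}=\int_0^{q^{ser}} P(q)\,dq$ (regions A+B+C+D+E+F+G in Figure \ref{fig:sup-dem}) with the equilibrium welfare $SW^{eq}=\int_0^{s} P(q)\,dq$ by splitting the latter into what serial monopoly keeps and what it loses. Writing $L=\int_{q^{ser}}^{s} P(q)\,dq$ for the lost welfare (regions H+I), we have $SW^{eq}=SW^{ser}+L$, so the claim $SW^{ser}\ge \tfrac12 SW^{eq}$ is equivalent to $L\le SW^{ser}$. I would establish this by showing that the monopolist revenue $REV^{mon}=p^{mon}q^{mon}$ sits between the two, namely $L\le REV^{mon}\le SW^{ser}$.

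For the upper bound on the loss, note that on the interval $q\in[q^{ser},s]$ the inverse demand satisfies $P(q)\le P(q^{ser})=p^{ser}$ because $P$ is decreasing, so $L\le p^{ser}(s-q^{ser})\le p^{ser}\cdot s$. This is exactly where the definition of the serial price earns its keep: $p^{ser}\cdot s=p^{mon}q^{mon}=REV^{mon}$ by construction, giving $L\le REV^{mon}$. For the lower bound on the kept welfare, since $q^{ser}>q^{mon}$ and $P\ge 0$ we have $SW^{ser}\ge\int_0^{q^{mon}} P(q)\,dq=SW^{mon}$; and on $[0,q^{mon}]$ we have $P(q)\ge P(q^{mon})=p^{mon}$, so $SW^{mon}\ge p^{mon}q^{mon}=REV^{mon}$. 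Chaining these yields $SW^{ser}\ge REV^{mon}$. Combining the two estimates gives $SW^{eq}=SW^{ser}+L\le SW^{ser}+REV^{mon}\le 2\,SW^{ser}$, which is the desired inequality.

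There is no serious technical obstacle once one spots the dual role played by $REV^{mon}$: it simultaneously dominates the foregone welfare (via the flat bound $P\le p^{ser}$ past $q^{ser}$, together with the identity $p^{ser} s=REV^{mon}$) and is dominated by the captured welfare (via $SW^{ser}\ge SW^{mon}\ge REV^{mon}$). The only conceptual step that requires care is recognizing that the choice of $p^{ser}$ is precisely what converts the crude rectangle bound $p^{ser}(s-q^{ser})$ into the revenue quantity $REV^{mon}$, so that the same quantity can be played against the welfare on both sides; everything else is monotonicity of $P$ and nonnegativity of the integrand. I would also remark that this argument does not use any tightness, so the tightness half of Theorem \ref{thm-sw} (an explicit $Q$ achieving ratio $1/2$) must be handled separately by exhibiting an instance that makes both inequalities $L\le REV^{mon}$ and $REV^{mon}\le SW^{ser}$ essentially simultaneous equalities.
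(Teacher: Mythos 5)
Your proof is correct and is essentially the paper's own argument: the paper's region computation in Figure \ref{fig:sup-dem} (area of H+I is at most the rectangle of area $p^{ser}\cdot s = p^{mon}\cdot q^{mon}$, which equals B+C+D and sits inside the serial-monopoly welfare) is exactly your chain $L \le p^{ser}\cdot s = REV^{mon} \le SW^{mon} \le SW^{ser}$, written with areas instead of integrals. Your closing remark is also accurate: the paper handles tightness separately, via Example \ref{ex-sw}, in the proof of Theorem \ref{thm-sw}.
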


\begin{proof}
In figure \ref{fig:sup-dem}, the social welfare at equilibrium is given by the areas of regions A+B+C+D+E+F+G+H+I, while the social welfare at the serial monopoly point is given by regions A+B+C+D+E+F+G.  
It thus suffices to show that the area of H+I is bounded from above by A+B+C+D+E+F+G.  
In fact it is even true that C+D+F+G+H+I+J is bounded by B+C+D.  That is true since the area of the former is $p^{ser} \cdot s$ while the latter
is $p^{mon} \cdot q^{mon}$ which are equal.
\end{proof}

While the proof implies various stronger bounds such as $SW^{ser} \ge max(SW^{mon},SW^{eq}-REV^{mon})$, the 
bound is tight as can be seen from the following example: 

\begin{example} \label{ex-sw}
Assume that the demand is for $M-1$ units at price 1 plus an additional unit at price $M+1$ for some 
large $M$ and assume that the supply is exactly $s=M$.  (This example is discrete giving
a non-continuous and only weakly decreasing demand function, but it is easy to add to it $\epsilon$ mass making it 
continuous and strictly
decreasing as in our analysis.) The equilibrium point is at $price=1$
and $quantity=M$ where the social welfare is $2M$.  
The monopolist would chose $p^{mon}=M+1$ and $q^{mon}=1$ obtaining welfare of only $M+1$.  
The serial monopoly price would be chosen as $p^{ser}=(M+1)/M > 1$ and thus $q^{ser}=1$ 
achieving social
welfare of $M+1$ rather than the possible $2M$. 
\end{example}

So we now have essentially proved theorem \ref{thm-sw}:

\begin{proof} ({\bf of Theorem \ref{thm-sw}})
The combination of example \ref{ex-sw} and lemma \ref{lem-sw} is exactly the statement of the Theorem.  
\end{proof}

In ``typical'' scenarios the loss is significantly lower as is demonstrated by the following examples.

\begin{example} \label{ex-uni}
Consider a demand that comes from the uniform distribution on $[0,1]$, i.e. $Q(p) = 1-p$ for $0 \le p \le 1$ 
and supply of $s=1$.  The inverse function is given by $P(q)=1-q$, the equilibrium price is $p^{eq}=0$ with $q^{eq}=1$ giving total welfare of 
$\int^{1}_{0} (1-q)dq =1/2$. The monopoly price is $p^{mon}=1/2$ with $q^{mon}=1/2$ 
obtaining social welfare of only $\int^{1/2}_{0} (1-q)dq = 3/8$.  
The serial monopoly price would be $p^{ser}=1/4$ with $q^{ser}=3/4$, so the total welfare would be 
$\int^{3/4}_{0} (1-q)dq = 15/32$, which is only $6.25\%$ lower than the optimal $1/2$.
\end{example}

\begin{example} \label{ex-er}
Consider a demand that comes from an ``equal revenue'' distribution: 
$Q(p)=1/p$ for $p \in [1,H]$ (with $Q(p)=1$ for $p \le 1$ and $Q(p)=0$ for $p > H$) and supply $s=1$. 
The inverse function is given by $P(q)=1/q$ for $q \in [1/H,1]$ (and $P(q)=H$ for $q<1/H$).
Equilibrium price is $p^{eq}=1$ where $q^{eq}=1$ at which the social welfare is
$\int^1_0 P(q) dq = \int^{1/H}_0 H \cdot dq + \int^{1}_{1/H} q^{-1} dq = 1+ \ln H$.  
The monopoly price is $p^{mon}=H$ with $q^{mon}=1/H$ (assuming ties are broken in the worst way, which could be ensured with a small perturbation) for
which the social welfare is only 1.  The 
serial monopoly price is $p^{ser}=1$ identical to the equilibrium price and so obtains full welfare of $1 + \ln H$.  (Here we had $p^{ser}=p^{eq}$ which can happen since in this example we have $Rev^{eq}=Rev^{mon}$.  With a tiny
perturbation we could have $Rev^{eq}<Rev^{mon}$ as assumed in our analysis of the dynamics which would then result in $p^{ser} > p^{eq}$ but arbitrarily close to it, still obtaining that the serial monopoly social welfare is arbitrarily close to the equilibrium welfare.)
\end{example}

\section{Strategic Equilibrium Analysis}\label{sec-eq}

In this section we attempt analyzing the dynamic behavior of this serial monopoly when the users are now strategic and ``patient''. 
I.e. while all of our previous analysis above assumed that each serial monopolist was faced with the true demand, we will
now assume that the monopolists are faced with the demand that is declared by the users.

\subsection{The Model}

We stick with our simple non-stochastic deterministic model with {\em price-taking bidders} and with the continuous and strictly decreasing daily demand distribution $Q$ and the supply $s$
being common knowledge.  We consider a game between
users and monopolists: for each time step $t$ we have a (newly born) continuum of infinitely-small users and a single (new) leader.  
Each of our users has a true value $v$, where, at each time step, $v$ is chosen according to the demand distribution $Q$.
Our game proceeds in time steps where at each time $t$ we have two stages.  In the first stage, each of the users ``born'' at this time 
declare a bid $\tilde{v}$ 
where each user with true value $v$ bids according to some strategic manipulation function $\tilde{v}=m_t(v)$.  We assume that users put 
their bid $\tilde{v}$ once when they first enter the market
rather than being able to change their bid every step.  Our users are fully patient and a user born at time $t$ with value $v$
gets utility $v-p^{t'}$ for the first $t' \ge t$ with $p^{t'} \le \tilde{v}$ (and $0$ if no such $t'$ exists).
The {\em declared daily demand} $\tilde{Q^t}()$ is generated by the distribution of these $\tilde{v}$'s and is added to the (declared) pent up demand
to obtain the total {\em declared} demand $\tilde{D}^t$ faced by the leader at time $t$.   In the second stage of this time step, the (newly born) leader of time $t$ gets to choose
a price $p^t$ for the current block and his reward for price $p^t$ is $p^t \cdot min(s,\tilde{D}^t(p^t))$.\footnote{Technically, our model allows the leaders to choose a price $p^t$ that leads to over-demand
$D(p^t) > s$, and this definition of utility of the users satisfies all the -- more than $s$ quantity -- of users who bid at least $p^t$ 
which may not be realistic.  
We could define precisely 
a quantity of exactly $s$ who get satisfied 
in such a case, but we do not have to worry about this here as our leaders will never (in any equilibrium) choose such $p^t$ since their own utility could be strictly increased by choosing $p'>p^t$
with $D(p')=s$ which gets higher revenue.} 

\subsection{An Equilibrium}

Let us start by identifying a natural equilibrium of this game:
First, the strategy of choosing $\tilde{p}^t$ to be the monopolist price for the declared 
total demand $\tilde{D}^t()$ at day $t$ is clearly a dominant strategy
for the leader of day $t$ so our equilibrium will fix this strategy for the leaders\footnote{Since the leaders get paid
according to the {\em declared} values, the leader's knowledge of the true demand does not allow him to do any better.}. 
We will have a fixed (time-independent) manipulation strategy for users: $\tilde{v} = m(v)$.
Once this user time-independent strategy function $m$ is chosen, the true demand distribution $Q()$ from which $v$ is chosen 
defines a declared demand distribution $\tilde{Q}()$
induced by $m(v)$.  The dynamics of the game then proceed as in our analysis in the previous 
sections but according to the declared distribution $\tilde{Q}()$ rather than the true distribution $Q()$.

\begin{example}
The user's strategy functions $m(v)=min(v,p^{eq})$ are in equilibrium with the leader's monopolist pricing for each day, where $p^{eq}$ is the equilibrium price of the true distribution $Q()$.  
In this equilibrium the daily prices and quantities do not fluctuate: $p^t=p^{eq}$ and $q^t=s$ for all $t$.
\end{example}

To show that this an equilibrium we first show that $p^t=p^{eq}$ is indeed the monopoly price for every leader.
Since with the declared demand induced by this $m$, the new daily declared demand at price $p^{eq}$ exactly exhausts 
the daily supply of the monopolist, $\tilde{Q}(p^{eq}=s$, and all pent-up demand is at lower values, 
maximizing revenue according to $\tilde{D}^t$
is the same as maximizing revenue according to $\tilde{Q}$ which is $p^{eq}$ for which
the declared demand is clearly $s$.  
I.e. in this case $p^{eq}$ is also the monopolist price of the declared $\tilde{Q}$ and thus there 
are no daily fluctuations in price but rather we always have $p^t=p^{eq}$.  
Now notice that the users' strategy function $m(v)=min(v,p^{eq})$ is indeed a best response since
users with $v < p^{eq}$ are not being served and they can only be served by paying $p^{eq}$ which they they are not willing to, 
while users with $v \ge p^{eq}$ are being served at price $p^{eq}$ and that is optimal for them.

\subsection{Characterization}

Technically, the equilibrium above is not the only Nash equilibrium point as, for example, we can have an equilibrium where everyone ``pretends'' 
that the supply is smaller than it really is $s' < s$, where all leaders only supply 
$s'$ and all bidders bid according to $m'(v)=p'^{eq}$ if $v \ge p'^{eq}$ and $m'(v)=0$ if $v < p'^{eq}$, where $p'^{eq}$ 
is the equilibrium price of the daily demand $Q()$ with the smaller pretended supply $s'$.
Note that with the given declared supply there is never more than $s'$ demand at a strictly positive 
price and so the leaders are best responding, while with these leader strategies there is a ``declared supply''
of only $s'$ and thus the users are best-replying.  This equilibrium is certainly artificial since
leaders are not using their dominant strategies.  

We will thus restrict ourselves to characterizing equilibria in {\em un-dominated strategies},
in which leaders will always use their dominant strategy of choosing the monopolist price 
(of the declared total demand at their day).  
Furthermore we will concentrate on manipulation strategies that are {\em time-invariant}, 
i.e. where for all $t$ the daily manipulation functions are the same $m_t(v)=m(v)$ for some fixed $m(v)$ (as in the example above)\footnote{This
time-invariance allows us to easily employ our analysis of the myopic case in section \ref{sec-dyn}, but our characterization does not {\em seem} to really require it.}
The strategy of users born at time $t$ will be given by the mapping $\tilde{v}=m(v)$
thus the induced declared new demand distribution at time $t$ is a fixed $\tilde{Q}$.  
I.e.  be an equilibrium if for every bidder that is born in time $t$ with value $v$, $m(v)$ is indeed a best reply
to all the other users bidding according to $m$ and all the leaders choosing the monopoly price.  So we have:

\begin{definition}
An equilibrium in un-dominated strategies with time-invariant manipulation is a manipulation function $m()$ such that for every user
born at time $t$ with value $v$, the bid $m(v)$ is a best reply to all the other users bidding according to $m()$ and all the leaders using
monopolist pricing.
\end{definition}

We now start analyzing the properties of such equilibria.

\begin{lemma} \label{no-fluc}
In every equilibrium in un-dominated strategies with time-invariant manipulation $m()$, we 
have that that for all $t$: $p^t=\tilde{p}^{eq}=\tilde{p}^{mon}$ where $\tilde{p}^{eq}$ and $\tilde{p}^{mon}$ are defined
with respect to the distribution $\tilde{Q}$ induced by $m$.
\end{lemma}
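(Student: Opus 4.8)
The plan is to show first that the realized prices must be \emph{constant}, and then to identify this constant with both $\tilde{p}^{eq}$ and $\tilde{p}^{mon}$. Throughout I use that, since the leaders play their dominant strategy of monopoly pricing, the price path $\{p^t\}$ is exactly the output of the non-strategic dynamics of \cref{sec-dyn} applied to the declared daily demand $\tilde{Q}$; I will therefore invoke \cref{A1,lem-main} with all quantities ($\tilde{p}^{ser}$, etc.) defined relative to $\tilde{Q}$. Write $\rho \eqdef \inf_t p^t$, so that $\rho \ge \tilde{p}^{ser}$ by \cref{A1}.

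The key step is a patient-deviation argument. Fix any $\varepsilon>0$. Since $\rho+\varepsilon > \tilde{p}^{ser}$, \cref{lem-main} guarantees a window length $\Delta$ such that every $\Delta$ consecutive steps contain a block whose price is at most $\rho+\varepsilon$. Hence a (price-taking, infinitesimal) user born at any time who declares $\tilde{v}=\rho+\varepsilon$ is served within $\Delta$ steps at a price at most $\rho+\varepsilon$; being fully patient, such a user's payment is at most $\rho+\varepsilon$. Consequently, in any equilibrium every user with $v>\rho$ must also pay at most $\rho+\varepsilon$ under $m$, for otherwise deviating to the bid $\rho+\varepsilon$ would strictly raise that user's utility, contradicting best response. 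As every realized price is at least $\rho$ and $\varepsilon$ was arbitrary, I conclude that \emph{every served user pays exactly $\rho$}.

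From this I derive that prices are constant. Suppose $p^t>\rho$ for some $t$. Every user served at block $t$ pays $p^t>\rho$, contradicting the previous paragraph; hence no user is served at block $t$, i.e. $q^t=\tilde{D}^t(p^t)=0$, and the leader's revenue there is $0$. But served users necessarily declare values at least $\rho$ (they are served at blocks priced $\rho$), so $\tilde{Q}(\rho)>0$ and the total declared demand at price $\rho$ in block $t$ is positive, yielding strictly positive revenue at $\rho$; this contradicts $p^t$ being the monopolist's revenue-maximizing price. Therefore $p^t=\rho$ for all $t$. To pin down the constant, note that on day $1$ there is no pent-up demand, so $\tilde{D}^1=\tilde{Q}$ and the leader's monopoly price gives $p^1=\tilde{p}^{mon}$; combined with $p^t\equiv\rho$ this yields $\tilde{p}^{mon}=\rho$. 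If $\tilde{q}^{mon}<s$ then $\tilde{p}^{ser}=\tilde{p}^{mon}\tilde{q}^{mon}/s<\tilde{p}^{mon}=\rho$, so applying \cref{lem-main} with any $p^\ast\in(\tilde{p}^{ser},\rho)$ would force some $p^t\le p^\ast<\rho$, contradicting $\rho=\inf_t p^t$. Hence $\tilde{q}^{mon}=s$, which is exactly the statement $\tilde{p}^{mon}=\tilde{p}^{eq}$, and altogether $p^t=\rho=\tilde{p}^{mon}=\tilde{p}^{eq}$ for every $t$.

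The step I expect to be the main obstacle is the clean reduction to \cref{sec-dyn} underlying the whole argument: the declared demand $\tilde{Q}$ induced by $m$ need not be continuous or strictly decreasing (for instance $m(v)=\min(v,p^{eq})$ places an atom at $p^{eq}$), whereas \cref{A1,lem-main} were stated under the standing regularity assumptions on the demand function. I would therefore first have to verify that these two lemmas survive for atomic/weakly-decreasing $\tilde{Q}$ — using the corrected definition of $\tilde{p}^{ser}$ flagged in the comment of \cref{sec-seq}. Secondary care points, which I would dispatch quickly, are the non-degeneracy assumption that a positive mass of true values exceed $\rho$ (otherwise no one is ever served and prices are vacuous), and the justification — via the price-taking, infinitely-small-user modeling — that a single user's deviation neither perturbs the price path nor is crowded out by the per-block supply cap $s$.
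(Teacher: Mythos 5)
Your proof is correct and follows essentially the same route as the paper: reduce equilibrium play to the non-strategic dynamics of \cref{sec-dyn} applied to the declared demand $\tilde{Q}$, use the recurrence of near-$\tilde{p}^{ser}$ prices (\cref{A1,lem-main}) together with full patience to show no user ever pays more than the lowest recurring price, and conclude that price fluctuations are impossible, forcing $\tilde{p}^{mon}=\tilde{p}^{eq}$. Your write-up is in fact more careful than the paper's short argument --- the $\varepsilon$-deviation step, the zero-revenue contradiction, and the day-one identification of the constant price make precise what the paper compresses into ``a bidder will never be willing to pay a higher price if he can later get a lower price,'' and the regularity gap you flag (the induced $\tilde{Q}$ may have atoms and fail strict monotonicity) is equally present, but unacknowledged, in the paper's own proof.
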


\begin{proof}
Un-dominated strategies (monopoly pricing) for the leaders and time-invariant manipulations 
($m()$) for the users lead us to exactly the myopic dynamics studied previously for the distribution $\tilde{Q}$ induced by $m()$.
Let $(\tilde{p}^{ser},\tilde{q}^{ser})$ be the serial monopoly price and quantity of the declared distribution.  
Our analysis in section \ref{sec-dyn} shows the dynamics with the declared demand
will essentially sell the quantity $\tilde{q}^{ser}$ (per day) at prices that fluctuate between $\tilde{p}^{mon}$ and $\tilde{p}^{ser}$.  At equilibrium, price fluctuations cannot occur since a bidder will never be willing to
pay a higher price if he can later get a lower price.  But as shown, such fluctuations will occur unless the equilibrium price $\tilde{p}^{eq}$ 
already gives the
monopolist's revenue.  It follows that at equilibrium $m()$ must induce a demand distribution $\tilde{Q}$ such 
that $\tilde{p}^{eq}=\tilde{p}^{mon}$ in which case we will have a fixed sale price $p^t=\tilde{p}^{eq}$ for all $t$.
\end{proof}

\begin{lemma}
In every equilibrium in un-dominated strategies with time-invariant manipulation $m()$
we have that for all $v<\tilde{p}^{eq}$ we have $m(v) < \tilde{p}^{eq}$ and for all $v>\tilde{p}^{eq}$ we have $m(v) \ge \tilde{p}^{eq}$. 
Hence $\tilde{Q}(\tilde{p}^{eq})=Q(\tilde{p}^{eq})$.
\end{lemma}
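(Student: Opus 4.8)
The plan is to leverage \cref{no-fluc}, which in any such equilibrium already pins the realized price path at the constant level $\tilde{p}^{eq}$ (equal to $\tilde{p}^{mon}$). The key structural feature I would exploit is that each user is infinitesimal: a unilateral deviation by a single user of measure zero does not change the aggregate declared demand $\tilde{D}^t$, and therefore does not change the leaders' monopoly prices. Hence, when checking a best reply for a user born at time $t$ with true value $v$, I may treat the entire price sequence as the fixed constant $p^{t'}=\tilde{p}^{eq}$ for all $t'\ge t$, independent of that user's own bid.

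Given this fixed price path, the payoff of a user with bid $\tilde{v}=m(v)$ is determined by a single threshold comparison. If $\tilde{v}\ge\tilde{p}^{eq}$, then the first time $t'\ge t$ with $p^{t'}\le\tilde{v}$ is $t'=t$ itself, so the user is served immediately at price $\tilde{p}^{eq}$ and obtains utility $v-\tilde{p}^{eq}$. If instead $\tilde{v}<\tilde{p}^{eq}$, then no $t'$ ever satisfies $p^{t'}=\tilde{p}^{eq}\le\tilde{v}$, so the user is never served and obtains utility $0$. (Full patience makes the timing irrelevant; only ``served at $\tilde{p}^{eq}$ or not'' matters.) Comparing the two options gives the claim directly: for $v>\tilde{p}^{eq}$ the option of bidding at least $\tilde{p}^{eq}$ yields strictly positive utility $v-\tilde{p}^{eq}>0$ while bidding below $\tilde{p}^{eq}$ yields $0$, so any best reply must satisfy $m(v)\ge\tilde{p}^{eq}$; symmetrically, for $v<\tilde{p}^{eq}$ bidding at least $\tilde{p}^{eq}$ yields strictly negative utility $v-\tilde{p}^{eq}<0$ while bidding below yields $0$, forcing $m(v)<\tilde{p}^{eq}$.

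For the final equality I would translate these pointwise statements into a statement about masses. By definition $\tilde{Q}(\tilde{p}^{eq})$ is the mass of daily users whose declared bid $m(v)$ is at least $\tilde{p}^{eq}$, while $Q(\tilde{p}^{eq})$ is the mass of users with true value at least $\tilde{p}^{eq}$. The two best-reply conclusions above show that $\{v:m(v)\ge\tilde{p}^{eq}\}$ contains $\{v>\tilde{p}^{eq}\}$ and is disjoint from $\{v<\tilde{p}^{eq}\}$, so this set is sandwiched between $\{v>\tilde{p}^{eq}\}$ and $\{v\ge\tilde{p}^{eq}\}$ and thus differs from either only on the single point $v=\tilde{p}^{eq}$; since $Q$ is continuous this point carries zero mass, the three masses coincide, and $\tilde{Q}(\tilde{p}^{eq})=Q(\tilde{p}^{eq})$ follows.

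The argument is short, and the only genuine subtlety---the step I would be most careful about---is the price-taking claim of the first paragraph: I must make sure that a single user's deviation truly leaves the equilibrium price sequence at $\tilde{p}^{eq}$, so that the best-reply comparison is against a fixed price rather than an endogenously shifting one. This rests on the continuum/infinitesimal-user modeling assumption together with \cref{no-fluc}; once it is granted, the rest is an elementary threshold computation plus the measure-zero boundary remark at $v=\tilde{p}^{eq}$.
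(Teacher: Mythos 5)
Your proposal is correct and takes essentially the same route as the paper: the paper's proof likewise rests on the constant price path from Lemma~\ref{no-fluc} plus the price-taking observation that a single bid does not affect $p^t$, and then applies exactly your threshold comparison (a bidder with $v<\tilde{p}^{eq}$ would rather lose than pay $\tilde{p}^{eq}$, and symmetrically for $v>\tilde{p}^{eq}$). Your explicit measure argument for $\tilde{Q}(\tilde{p}^{eq})=Q(\tilde{p}^{eq})$, including the zero-mass boundary point $v=\tilde{p}^{eq}$, merely spells out a step the paper leaves implicit.
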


\begin{proof}
A bidder with $v<\tilde{p}^{eq}$ would rather lose than pay $p^t=\tilde{p}^{eq}$ so, as his bid does not affect 
the price $p^t$, he will have to bid lower than $\tilde{p}^{eq}$ for that to happen.  
The same (but opposite) is true for $v>\tilde{p}^{eq}$.
\end{proof}

We are now ready to prove theorem \ref{thm-eq}.

\begin{proof} (of theorem \ref{thm-eq})
The first part of the theorem was shown in the example in the previous subsection.  We now prove the second part.

As shown in lemma \ref{no-fluc}, in any such equilibrium we have a fixed sale price $p^t=\tilde{p}^{eq}$ for all $t$ and thus it 
suffices to show that $\tilde{p}^{eq}=p^{eq}$.  Since $\tilde{p}^{eq}$ is the equilibrium price of $\tilde{Q}()$ then 
$\tilde{Q}(\tilde{p}^{eq}) = s$ and
since, by the previous lemma, $\tilde{Q}(\tilde{p}^{eq})=Q(\tilde{p}^{eq})$ we also have $Q(\tilde{p}^{eq}) = s$.
But since $p^{eq}$ is the market equilibrium price of $Q$ we also have $Q(p^{eq})=s$
and since $Q$ was assumed to be strictly decreasing we must have thus $p^{eq} = \tilde{p}^{eq}$. 
\end{proof}

\section*{Acknowledgments}
I would like to thank
Abdelhamid Bakhta, Eli Ben-Sasson, Tom Brand, Ittay Dror, Lior Goldberg, Louis Guthmann, 
Oren Katz, Yoav Kolumbus, Avihu Levi, Tim Roughgarden, Ilya Voloch, and Aviv Zohar for useful discussions and comments.

\section*{Appendix: Postponed proofs of lemmas}

\begin{proof} (of lemma \ref{lem-a})
We can prove (1) and (2) by induction on $t$:
when moving from step $t-1$ to $t$ we have that
$D^{t}(p)-D^{t}(p')=(Z^{t-1}(p)-Z^{t-1}(p')) + (Q(p)-Q(p')) \le (D^{t-1}(p)-D^{t-1}(p')) + (Q(p)-Q(p'))$ and thus the LHS increases by at most
$((Q(p)-Q(p'))$ which is exactly how the RHS increases.  The base of the induction
holds as for part A, $t=1$, we have $Z^0(p)=Z^0(p')=0$, while for part B, $t=T+1$, we have $D^t(p)-D^t(p')=(Q(p) - Q(p')) + (Z^{T}(p)-Z^{T}(p'))$.

For (3), note that when $p^{t-1} \ge p' >p$ we have that $Z^{t-1}(p)=D^{t-1}(p)-q^{t-1}$ and $Z^{t-1}(p')=D^{t-1}(p')-q^{t-1}$ so
$D^t(p)-D^t(p')=(D^{t-1}(p)-D^{t-1}(p'))+(Q(p)-Q(p'))$ and thus we get equalities throughout the induction.

For (4) and (5), just note that $p^T \le p < p'$ (or $T=0$) we actually have $Z^{T}(p)=Z^{T}(p')=0$ and then apply (2) and (3) respectively.
\end{proof}

\begin{proof} (of lemma \ref{A1})
The maximum revenue that is achievable from a price $p$ is $p \cdot s$ and when $p < p^{ser}$ we have that 
$p \cdot s < p^{mon} \cdot q^{mon}$ and that revenue can be achieved at any step using the monopolist price.
\end{proof}

\begin{proof} (of lemma \ref{lem-b1})
For $p \ge p^{t-1}$ we have that $D^t(p)=Q(p)$ so the maximal revenue obtained by possible $p \ge p^{t-1}$ is exactly the monopolist's revenue that is obtained at $p^t=p^{mon}$ (we assume that ties in maximum 
revenue are broken
consistently).  So, unless $p^t=p^{mon}$ then we must obtain the maximum 
revenue in the range $p < p^{t-1}$.  
\end{proof}

\begin{proof} (of lemma \ref{lem-b2})
We will prove the first part of the lemma.  
First we cannot have $p^t=p^{mon}$ as the revenue obtained from $p$ would be higher: 
$p \cdot s > p^{ser} \cdot s = p^{mon} \cdot q^{mon}$.

As $p^t$ gives better revenue than $p$ i.e., we have that 
$p^t \cdot D^t(p^t) \ge p \cdot s = (p-p^{ser}) \cdot s + p^{ser} \cdot s = (p-p^{ser}) \cdot s + p^{mon} \cdot q^{mon}$.

Separating the total demand at time $t$ to its two components we get
$p^{t} \cdot D^{t}(p^{t}) = p^{t} \cdot Z^{t-1}(p^{t}) + p^{t} \cdot Q(p^{t}) \le p^{t} \cdot Z^{t-1}(p^{t}) + p^{mon} \cdot q^{mon} \le p^{mon} \cdot Z^{t-1}(p^{t}) + p^{mon} \cdot q^{mon}$. 

Putting these together we get that 
$(p-p^{ser}) \cdot s \le p^{mon} \cdot Z^{t-1}(p^{t})$.
Now $Z^{t-1}(p^{t}) = Z^{t-1}(p^t)-Z^{t-1}(p^{t-1}) \le D^{t-1}(p^t)-D^{t-1}(p^{t-1}) \le (t-1) \cdot (Q(p^t)-Q(p^{t-1})$ so it follows that
$(p-p^{ser}) \cdot s \le (t-1) \cdot p^{mon} \cdot (Q(p^t)-Q(p^{t-1})$ and thus $(Q(p^t)-Q(p^{t-1}) \ge (t-1)^{-1} \cdot s \cdot (p-p^{ser})/p^{mon}$.

The second part of the lemma is similar after taking into account that 
$D^{t-1}(p^t)-D^{t-1}(p^{t-1})$ is actually bounded by
$(t-1-T) \cdot (Q(p^t)-Q(p^{t-1})$.
\end{proof}

\begin{proof} (of lemma \ref{lem-c})
If we have that $p^t \ge p'$ for all $T \le t \le T+\Delta$, then using
lemma \ref{lem-a}(3) we have 
$D^{T+\Delta}(p) \ge D^{T}(p)-D^{T}(p') = \Delta \cdot (Q(p)-Q(p')) + (Z^T(p)-Z^T(p')) \ge \Delta \cdot (Q(p)-Q(p'))$.  
So just choose $\Delta_0 = s/(Q(p)-Q(p'))$.
\end{proof}

\begin{proof} (of lemma \ref{lem-main})
Assume not, ant let $T$ be some time step at which $p^t \le p^*$ or $T=0$ and let $p=(p^*+p^{ser})/2$ so 
$p^{ser} < p < p^*$ and $p^*-p^{ser} = 2 \cdot (p-p^{ser})$.  
By lemma \ref{lem-c} there exists $\Delta_0$ after which $D^{t}(p) \ge s$ for all 
$t \ge T+\Delta_0$ until the first time that $p^t \le p^*$.  
Fix any $\Delta > \Delta_0$ so that $p^t > p^*$ for all $T+\Delta_0 < t \le T+\Delta$. 
using lemma
\ref{lem-b2} we get a decreasing sequence of prices $p^{T+\Delta_0} > p^{T+\Delta_0+1} > p^{T+\Delta_0+2} > \cdots p^{T+\Delta}$
with $(Q(p^{t+1})-Q(p^{t})) \ge (t-T)^{-1} \cdot s \cdot (p-p^{ser}) / p^{mon}$.  
Summing up over all $T+\Delta_0 < t \le T+\Delta$ we get 
$Q(p^{T+\Delta})-Q(p^{T+\Delta_0}) \ge (\sum_{t=T+\Delta_0+1}^{T+\Delta} (t-T)^{-1}) \cdot s \cdot (p-p^{ser}) / p^{mon}$.  
We now estimate
$\sum_{t=T+\Delta_0+1}^{T+\Delta} (t-T)^{-1} = \sum_{i=\Delta_0+1}^{\Delta} i^{-1} \ge \ln(\Delta/(1+\Delta_0))$.
So $Q(p^{T+\Delta})-Q(p^{T+\Delta_0}) \ge \ln(\Delta/(1+\Delta_0)) \cdot s \cdot (p-p^{ser}) / p^{mon}$.
Since $Q(p^{T+\Delta})-Q(p^{T+\Delta_0}) \le Q(p^{ser})-Q(p^{mon})$, whenever $Q(p^{ser})-Q(p^{mon}) < \ln(\Delta/(1+\Delta_0)) \cdot s \cdot (p-p^{ser}) / p^{mon}$ 
then we to get a contradiction.  
I.e. if we choose $\Delta$ so that $\ln(\Delta)  > \ln(1+\Delta_0) + 2 \cdot (Q(p^{ser})-Q(p^{mon})) \cdot p^{mon} / (s \cdot (p^*-p^{ser}))$
then at some step $T+\Delta_0 \le t \le T+\Delta$ we must have $p^t \le p^*$.  
\end{proof}

\begin{proof} (of lemma \ref{lem-q})
The proof is by induction on $T$. 
For $T=1$, the LHS is $0$ and the RHS is $0$.
When moving from step $T-1$ to $T$, the LHS grows by exactly $q^{T-1}$.  
The first term on the RHS grows
by $Q(p)$ and $Z^T(p) \ge D^{T-1}(p)-q^{T-1}$ and thus the second term on the RHS,
$D^T(p)=Z^T(p)+Q(p)$, grows by
at least $Q(p)-q^{T-1}$, as needed.  
When $p \le p^{T-1}$ then we have that $Z^T(p) = D^{T-1}(p)-q^{T-1}$ and the second term
grows by exactly the required amount.
\end{proof}

\begin{proof} (of lemma \ref{lem-up})
Assume by way of contradiction that that there is some last time where $p^{t} = p^{mon}$ and thus by lemma \ref{lem-b1} after this time the prices $p^t$ are a monotone decreasing sequence and so by lemma \ref{lem-main}
they approach $p^{ser}$.  Let $q^{ser} < q^* < s$ (see section \ref{sec-seq}), let $p^*>p^{ser}$ be 
so that $p^* \cdot q^* <  p^{ser} \cdot s = p^{mon} \cdot q^{mon}$ (such a 
value for $p^*$ must exist since $Q()$ is continuous), and let $T_0$ be a point
for which for every $t > T_0$ we have $p^t < p^*$ (which must exist according to lemma \ref{lem-main}).  

Since $p^t$ optimizes revenue at time $t$ we also must have $p^t \cdot q^t \ge p^* \cdot q^*$ and since 
$p^t < p^*$ for $t > T_0$ we must have $q^t > q^*$ for all $t > T_0$.  It follows that the total supplied quantity up to some large
time $T>T_0$ is $\sum_{t=1}^{T} q^t \ge (T-T_0) \cdot q^*$.  We now apply lemma \ref{lem-q} to get 
$\sum_{t=1}^{T} q^t = (T+1) \cdot Q(p^{ser}) - D^T(p^{ser}) \le (T+1) \cdot q^{ser}$.  
Putting these together we have that $(T-T_0) \cdot q^* \le (T+1) \cdot q^{ser}$ which is a contradiction for large enough $T$ since
$q^* > q^{ser}$.
\end{proof}

\end{document}